\newtheorem{theorem}{Theorem}[section]
\newtheorem{corollary}[theorem]{Corollary}
\newtheorem{definition}[theorem]{Definition}
\newtheorem{lemma}[theorem]{Lemma}
\newcounter{listagem}
\newcommand{\blista}{\begin{list}{\roman{listagem})}{\usecounter{listagem}}}
\newcommand{\elista}{\end{list}}
\newcommand{\beq}{\begin{equation}}
\newcommand{\eeq}{\end{equation}}
\newcommand{\beqn}{\begin{eqnarray}}
\newcommand{\eeqn}{\end{eqnarray}}
\def\le{<}
\def\ge{>}
\newcommand{\Cl}{{C \kern -0.1em \ell}}
\newcommand{\R}{\mathbb{R}}
\begin{document}

\title{The Klein-Gordon operator on M\"obius strip domains and the Klein bottle in $\R^n$}
\author{Rolf S\"oren Krau{\ss}har \\ \\
{\small Arbeitsgruppe Algebra, Geometrie und Funktionalanalysis} \\
{\small Fachbereich Mathematik} \\ {\small Technische Universit\"at Darmstadt }\\{\small Schlo{\ss}gartenstr. 7}\\{\small 64289 Darmstadt, Germany.} \\ {\small E-mail: krausshar@mathematik.tu-darmstadt.de}} \maketitle


\begin{abstract}

In this paper we present explicit formulas for the fundamental solution to the Klein-Gordon operator on some higher dimensional generalizations of the M\"obius strip and the Klein bottle with values in distinct pinor bundles. The fundamental solution is described in terms of generalizations of the Weierstra{\ss} $\wp$-function that are adapted to the context of these geometries. 
The explicit formulas for the kernel then allow us to express all solutions to the homogeneous and inhomogeneous Klein-Gordon problem with given boundary data in the context of these manifolds. In the case of the Klein bottle we are able to describe all null solutions of the Klein-Gordon equation in terms of finite linear combinations of the fundamental solution and its partial derivatives. 
\end{abstract}

\textbf{Keywords:} Klein-Gordon operator, Green's functions, Harmonic analysis, non-orientable manifolds,  M\"obius strip, Klein bottle, Weierstra{\ss} $\wp$-function
\par\medskip\par
\textbf{MSC 2010:} 35Q40, 35A08, 31C12, 53C21, 53C27.
\par\medskip\par
\textbf{PACS numbers:} 02.40.Vh, 02.30.Jr, 04.20.Jb.

\section{Introduction}

The Klein-Gordon equation is a relativistic version of the Schr\"odinger equation. It describes the motion of a quantum scalar or pseudoscalar field, a field whose quanta are spinless particles. The Klein-Gordon equation describes the quantum amplitude for finding a point particle in various places, cf. for instance \cite{D,S}. It can be expressed in the form 
$$
(\Delta -\alpha^2-\frac{1}{c^2} \frac{\partial^2}{\partial t^2})u(x;t) = 0,
$$
where $\Delta := \sum_{i=1}^3 \frac{\partial^2}{\partial x_i^2}$ is the usual Euclidean Laplacian in $\R^3$ and $\alpha = \frac{mc}{\hbar}$. Here, $m$ represents the mass of the particle, $c$ the speed of light and $\hbar$ is the Planck number.   
\par\medskip\par
Since a long time it is well known that any solution of the Dirac equation, which describes the spinning electron, satisfies the Klein-Gordon equation. However, the converse is not true. In the time-independent case the homogeneous Klein-Gordon equation simplifies to a screened Poisson equation of the form  
$$
(\Delta -\alpha^2) u(x) = 0. 
$$
The resolution of this equation provides the first step to solve the more complicated time-dependent case. For this reason, the study of the time-independent solutions is very important. In this paper we therefore focus exclusively on the time-independent equation. 
 
As explained extensively in the literature, see for example \cite{GS2,KC} and elsewhere, the quaternionic calculus has proven to be a useful tool to study the time-independent Klein-Gordon equation. It allows us to factorize the Klein-Gordon operator elegantly by $$\Delta  - \alpha^2 =- (D -i \alpha) (D + i\alpha)$$ where $D := \sum_{i=1}^3 \frac{\partial }{\partial x_i} e_i$ is the Euclidean Dirac operator associated to the spatial variable $x=x_1 e_1+x_2 e_2 + x_3 e_3$. In the quaternionic calculus the basis elements $e_1,e_2,e_3$ behave like algebraically the quaternionic basic units ${\bf i}$, ${\bf j}$ and ${\bf k}$. The study of the solutions of the original scalar second order equation is thus reduced to study vector valued eigensolutions to the first order Dirac operator associated to purely imaginary eigenvalues. For eigensolutions of the first order Euclidean Dirac operator it was possible to develop a powerful higher dimensional version of complex function theory, see for instance \cite{GS2,KC,Zhenyuan2,Ry94}. By means of these function theoretical methods it was possible to set up fully analytic representation formulas for the solutions to the homogeneous and inhomogeneous Klein-Gordon in the three dimensional Euclidean space in terms of quaternionic integral operators. 
\par\medskip\par 
In our recent papers \cite{KraKG,ConKraMMAS2009}, we have shown how these techniques can be carried over to the context of conformally flat oriented cylinders and tori in $\R^3$, and more generally in $\R^n$. The fundamental solution of the Klein-Gordon operator has been constructed from elementary spinor valued multiperiodic solutions of the Klein-Gordon operator. 
\par\medskip\par
The aim of this paper is to construct with analogous pseudo-multiperiodic series constructions the fundamental solution of the  Klein-Gordon operator on some important examples non-orientable conformally flat manifolds. In the context of this paper we focus on higher dimensional generalizations of the M\"obius strip and the Klein bottle. These manifolds play an important role in general quantum field theory and relativity theory, but also in string and $M$-theory, see for instance \cite{BBS}.
\par\medskip\par
In contrast to the case of the oriented tori and cylinder that we considered earlier, which are examples of spin manifolds, we cannot construct the fundamental solution of the Klein-Gordon equation in terms of spinor valued sections that are in the kernel of $D - i\alpha$. First of all due to the lack of orientability, we cannot construct spinor bundles over these manifolds. Secondly, it is not possible either to construct non-vanishing solutions in the class Ker $D-i\alpha$ in $\R^n$ that have the additional pseudo periodic property to descend properly to these manifolds. A successful way is to start directly from special classes of harmonic functions that take values in bundles of the $^+Pin(n)$ group or $^-Pin(n)$ group. As described for instance in \cite{BD,BRS,KT} the consideration of pin structures is the most logical analogue to spin structures in the non-oriented cases. Pin structures are crucially applied in unoriented superstring theory where non-orientable worldsheets are considered, cf. for instance \cite{BM}. 

By means of special classes of pseudo-multiperiodic harmonic functions we develop series representations for the Green's kernel of the Klein-Gordon operator for some $n$-dimensional generalizations of the M\"obius strip and the Klein bottle with values in different pin bundles. These functions represent a generalization of the Weierstra{\ss} $\wp$-function to the context of these geometries. 

As applications we present some integral formulas that provide us with the basic stones for doing harmonic analysis in this geometrical context. This paper provides an extension to our prior works \cite{KraNonorientable} and \cite{KRV} in which we developed analogous formulas for the Laplacian and for the Schr\"odinger operator on these manifolds. 

\par\medskip\par

The case of the Klein bottle has interesting particular features. As a consequence of the compactness of the Klein bottle we can prove that every solution of the Klein-Gordon operator having atmost unsessential singularities can be expressed as a finite linear combination of the fundamental solution and a finite amount of its partial derivatives. The only entire solution of the Klein-Gordon equation on the Klein bottle is the function $f \equiv 0$.    

\section{Pin structures on conformally flat manifolds}

Conformally flat manifolds are $n$-dimensional Riemannian manifolds that possess atlases whose transition functions are conformal maps in the sense of Gauss. For $n > 3$ the set of conformal maps coincides with the set of M\"obius transformations. In the case $n=2$ the sense preserving conformal maps are exactly the holomorphic maps. So, under this viewpoint we may interpret conformally flat manifolds as higher dimensional generalizations of holomorphic Riemann surfaces. On the other hand, conformally flat manifold are precisely those Riemannian manifolds which have a vanishing Weyl tensor. 

\par\medskip\par

As mentioned for instance in the classical work of N. Kuiper \cite{Kuiper}, concrete examples of conformally flat orbifolds can be constructed by factoring out a simply connected domain $X$ by a Kleinian group $\Gamma$ that acts discontinuously on $X$. In the cases  where $\Gamma$ is torsion free, the topological quotient $X/\Gamma$, consisting of the orbits of a pre-defined group action $\Gamma \times X \to X$, is endowed with a differentiable structure. We then deal with examples of conformally flat manifolds. 
    
\par\medskip\par
In the case of oriented manifolds it is natural to consider spin structures. In the non-oriented case, this is not possible anymore. However, one can consider pin structures instead. For details about the description of pin structures on manifolds that arise as quotients by discrete groups. we refer the reader for instance to \cite{BD}. See also \cite{BRS} and \cite{KT} where in particular the classical M\"obius strip and the classical Klein bottle has been considered.
\par\medskip\par  
A classical way of obtaining pin structures for a given Riemannian manifold is to look for a lifting of the principle bundle associated to the orthogonal group $O(n)$ to a principle bundle for the pin groups $^{\pm}Pin(n)$. As described in the above cited works, the group $^+Pin(n):=Pin(n.0)$ is associated to the Clifford algebra $Cl_{n,0}$ of positive signature $(n,0)$. The Clifford algebra $Cl_{n,0}$ is defined as the free algebra modulo the relation $x^2= q_{n,0}(x)$ ($x \in \R^n$) where $q_{n,0}$ is the quadratic form defined by $q_{n,0}(e_i)=+1$ for all basis vectors $e_1,\ldots,e_n$ of $\R^n$. For particular details about Clifford algebras and their related classical groups we also refer the reader to \cite{p}. Next we recall that the group 
$^-Pin(n) := Pin(0,n)$ is associated to the Clifford algebra $Cl_{0,n}$ of negative signature $(0,n)$. Here the quadratic form $q_{n,0}$ is replaced by the quadratic form $q_{0,n}$ defined by $q_{0,n}(e_i) =-1$ for all $i=1,\ldots,n$.  
Topologically both groups are equivalent, however algebraically they are not isomorphic, cf. for example~\cite{KT}.  The more popular $Spin(n)$ group is a subgroup of $^{\pm}Pin(n)$ of index $2$. Here we have $Spin(n):=Spin(0,n) \cong Spin(n,0)$. $Spin(n)$ consists exactly of those matrices from $^{\pm}Pin(n)$ whose determinant equals $+1$. 
The groups $^{\pm}Pin(n)$ double cover the group $O(n)$. 

So there are surjective homomorphisms $^{\pm}\theta: ^{\pm}Pin(n)\rightarrow ^{\pm}Pin(n)$ with kernel ${\mathbb{Z}}_{2}=\{\pm 1\}$. Adapting from Appendix C of \cite{pp}, where spin structures have been discussed, this homomorphism gives rise to a choice of two local liftings of the principle $O(n)$ bundle to a principle $^{\pm}Pin(n)$ bundle. The number of different global liftings is given by the number of elements in the cohomology group $H^{1}(M,{\mathbb{Z}}_{2})$. See \cite{lm} and elsewhere for details. These choices of liftings give rise to different pinor bundles over $M$. We shall explain their explicit construction on the basis of the examples that we consider in the following two sections.

\section{Higher dimensional M\"obius strips}

In all that follows $\{e_1,\ldots,e_n\}$ stands for the standard basis in $\R^n$. 
\par\medskip\par
Let $k \in \{1,\ldots,n\}$ and $\underline{x}= x_1 e_1 + \cdots + x_k e_k$ be a reduced vector from $\mathbb{R}^k$. Let 
$\Lambda_k:= \mathbb{Z} e_1 + \cdots + \mathbb{Z} e_k$ be the $k$-dimensional orthonormal lattice. 
Suppose further that $\underline{v}:= m_1 e_1 + \cdots + m_k e_k$ is a vector from that lattice $\Lambda_k \subset \mathbb{R}^k$.
\par\medskip\par 
Now it is natural to introduce higher dimensional analogoues of the M\"obius strip by the factorization
$$
{\cal{M}}_k^{-} = \mathbb{R}^n/\sim
$$
where $\sim$ is defined by the map $$(\underline{x}+\underline{v},x_{k+1},\ldots,x_{n-1},x_n) \mapsto (x_1,\ldots,x_k,x_{k+1},\ldots,x_{n-1},{\rm sgn}(\underline{v})x_n).$$
Here, for $\underline{v} = m_1 e_1 + \cdots + m_k e_k$ we write
${\rm sgn}(\underline{v}) = \left\{ \begin{array}{cc} 1 & {\rm if}\; \underline{v} \in 2 \Lambda_k \\ -1 & {\rm if}\; \underline{v} \in \Lambda_k \backslash 2 \Lambda_k. \end{array} \right. $

The manifold ${\cal{M}}_k^{-}$ consists of the orbits of the group action $\Lambda_k \times \R^n \to \R^n$, where the action is defined by 
$$
\underline{v} \circ x := (\underline{x}+\underline{v},x_{k+1},\ldots,x_{n-1},{\rm sgn}(\underline{v}) x_n).
$$
The classical M\"obius strip is obtained in the case $n=2,k=1$ where the pair $(x_1+1,x_2,X)$ is mapped to $(x_1,-x_2,X)$ after one period. Due to the switch of the minus sign in the $x_n$-component we really deal in that context with non-orientable manifolds. The manifolds ${\cal{M}}_k^-$ are no spin manifolds.

\par\medskip\par

However, analogously to the case of a spin manifold, we can set up several distinct pinor bundles on these manifolds. An extensive algebraic description of all pin structures on the classical M\"obius strip can be found for instance in \cite{BRS}. A simple way to obtain different pin structures is to consider certain decompositions of the period lattices, as we did for the oriented cylinders and tori in our previous works, cf. e.g. \cite{KraRyan2}. 
The decomposition of the lattice $\Lambda_k$ into the direct sum of the sublattices $\Lambda_l:= \mathbb{Z} e_1 + \cdots + \mathbb{Z} e_l$ and $\Lambda_{k-l}:= \mathbb{Z} e_{l+1} + \cdots + \mathbb{Z} e_k$ gives rise to $2^k$ different pinor bundles, namely  by mapping the tupel $$(\underline{x}+\underline{v},x_{k+1},\ldots,x_n,X) \; {\rm to}\; (\underline{x}, x_{k+1},...,x_{n-1},{\rm sgn}(v) x_n,(-1)^{m_1+\cdots+m_k}X).$$
Here $X \in Cl_n(\mathbb{C})$, where $Cl_n(\mathbb{C})$ stands for the complexification of the Clifford algebra $Cl_{n,0}$ or $Cl_{0,n}$, respectively. 

\par\medskip\par

In order to describe the fundamental solution of the time-independent Klein-Gordon operator on the manifolds ${\cal{M}}_k^-$ we recall from standard literature (see for instance \cite{Zhenyuan2}) that the fundamental solution to the Klein-Gordon operator $\Delta-\alpha^2$ in Euclidean flat space $\R^n$ is given by 
\begin{equation}
E_{\alpha}(x) = - \frac{i \pi}{2 \omega_n \Gamma(n/2)} (\frac{1}{2}i\alpha)^{\frac{1}{2}n-1} |x|^{1-\frac{1}{2}n}
H_{\frac{n}{2}-1}^{(1)}(i\alpha |x|).
\end{equation}
In this formula, $\omega_n$ stands for the surface measure of the unit sphere in $\R^n$ while $H_{m}^{(1)}$ denotes the first Hankel function with parameter $m$. Furthermore, we choose the root of $\alpha^2$ such that $\alpha > 0$. Since the Hankel function $H_{\frac{n}{2}-1}^{(1)}$ has a point singularity of the order $\frac{1}{2}n-1$ at the origin, the function $E_{\alpha}(x)$ has a point singularity of order $n-2$ at the origin. This is a consequence of the prefactor $|x|^{1-\frac{1}{2}n}$. 

\par\medskip\par

We now shall see that we can construct the Green's function of the Klein-Gordon operator on ${\cal{M}}_k^-$ by summing the fundamental solution $E_{\alpha}(x)$ over the period lattice $\Lambda_k$ in a way involving the appropriate minus sign in the $x_n$-component in exactly those terms of the series which are associated to lattice points where ${\rm sgn}(\underline{v}) = -1$. To proceed in this direction we first prove:
\begin{lemma}\label{conv}
Let $\alpha \in \R \backslash \{0\}$. The series 
\begin{equation}
\wp_{\alpha}^{{\cal{M}}_k^-}(x):=\sum\limits_{\underline{v} \in \Lambda_k}  E_{\alpha}(\underline{x}+\underline{v},x_{k+1},\ldots,x_{n-1},{\rm sgn}(\underline{v})x_n)
\end{equation}
converges uniformely in each compact subset of $\R^n\backslash \Lambda_k$.  
\end{lemma}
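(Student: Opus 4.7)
The plan is to decompose the series into a finite near-part and an infinite tail, then kill the tail with the exponential decay of the Hankel function that is hidden inside $E_\alpha$.

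First I would observe that the map $\underline{v}\mapsto(\underline{x}+\underline{v},x_{k+1},\dots,x_{n-1},{\rm sgn}(\underline{v})x_n)$ only changes the sign of the last coordinate, so
\[
\bigl|(\underline{x}+\underline{v},x_{k+1},\dots,x_{n-1},{\rm sgn}(\underline{v})x_n)\bigr|^{2}=|\underline{x}+\underline{v}|^{2}+x_{k+1}^{2}+\dots+x_{n}^{2}.
\]
In particular the term indexed by $\underline{v}$ is singular only when $\underline{x}+\underline{v}=0$ and $x_{k+1}=\dots=x_{n}=0$, i.e.\ exactly on $\Lambda_k$. Fix a compact set $K\subset\R^{n}\setminus\Lambda_k$; there is $R_{0}>0$ such that for every $x\in K$ the inequality $|\underline{v}|\ge 2(R_{0}+\mathrm{diam}(K))$ implies
\[
\tfrac12|\underline{v}|\;\le\;\bigl|(\underline{x}+\underline{v},x_{k+1},\dots,x_{n-1},{\rm sgn}(\underline{v})x_n)\bigr|\;\le\;2|\underline{v}|.
\]
The finite collection of terms with $|\underline{v}|\le 2(R_{0}+\mathrm{diam}(K))$ is a sum of continuous functions on $K$, hence bounded and harmless; all that matters is the tail.

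Next I would invoke the standard large-argument asymptotics of the first Hankel function,
\[
H^{(1)}_{\nu}(iz)=\sqrt{\tfrac{2}{\pi i z}}\;e^{\,i(iz-\nu\pi/2-\pi/4)}\Bigl(1+O(z^{-1})\Bigr)\qquad(z\to+\infty),
\]
which yields $|H^{(1)}_{n/2-1}(i\alpha|y|)|\lesssim |y|^{-1/2}e^{-\alpha|y|}$ for $|y|$ large. Combined with the explicit prefactor $|y|^{1-n/2}$ in the formula for $E_\alpha$, this gives, uniformly for $x\in K$ and $|\underline{v}|$ large,
\[
\bigl|E_{\alpha}(\underline{x}+\underline{v},x_{k+1},\dots,x_{n-1},{\rm sgn}(\underline{v})x_{n})\bigr|\;\le\;C\,|\underline{v}|^{(1-n)/2}\,e^{-\alpha|\underline{v}|/2},
\]
where $C=C(K,\alpha,n)$ is independent of $\underline{v}$ and of $x\in K$.

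The final step is a Weierstrass majorant argument: the number of lattice points with $|\underline{v}|$ in a shell of radius $R$ grows at most like $R^{k-1}$, so the tail of the series is bounded by a multiple of $\sum_{R\ge R_{1}}R^{k-1}R^{(1-n)/2}e^{-\alpha R/2}$, which converges since the exponential decay crushes any polynomial factor. As the bound is independent of $x\in K$, the Weierstrass $M$-test yields uniform (and absolute) convergence on $K$, proving the lemma.

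The only real obstacle is bookkeeping the Hankel asymptotics together with the prefactor in $E_\alpha$ to extract genuine exponential decay in $|y|$; once that is in hand, the reflection sign in the last coordinate is immaterial because it drops out of the norm, and the convergence follows from the same lattice-counting estimate one would use in the oriented torus/cylinder case already treated in \cite{KraKG,ConKraMMAS2009}.
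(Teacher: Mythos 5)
Your proof is correct and follows essentially the same route as the paper's: extract exponential decay of $E_\alpha$ from the large-argument asymptotics of $H^{(1)}_{n/2-1}$, split off a finite near-part on the compact set, and kill the tail with a lattice-shell counting estimate plus the Weierstrass $M$-test. The only cosmetic difference is that you count lattice points in Euclidean shells while the paper uses maximum-norm shells $\Omega_m=\{\underline{v}:|\underline{v}|_{\max}=m\}$ with cardinality $(2m+1)^k-(2m-1)^k$; your explicit remark that the sign flip in the last coordinate leaves the Euclidean norm unchanged is a useful clarification that the paper leaves implicit.
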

\begin{proof}
To prove the convergence we use the well-known asymptotic formula 
\begin{equation}\label{asy}
H_m^{(1)}(\alpha |x|_2) \sim \sqrt{\frac{2}{\pi
|\alpha||x|_2}} e^{i \alpha|x|_2},\quad {\rm for}\;|x|\;{\rm large}, 
\end{equation}
also used in \cite{ConKraMMAS2009}, which holds for all positive parameters $m \in \frac{1}{2} \mathbb{N}$ whenever $|x|$ is sufficiently large. From formula (\ref{asy}) we obtain the asymptotic estimate 
$$
|E_{\alpha}(x)| \le c \frac{e^{-\alpha|x|}}{|x|^{(n-1)/2}}
$$ 
which then holds for sufficiently large values for $|x|$, where $c$ is a real constant. 

Next we decompose the lattice $\Lambda_k$ in terms of the union of the sets  $\Lambda_k = \bigcup_{m=0}^{+\infty} \Omega_m$ where
$$
\Omega_m := \{ \underline{v} \in \Lambda_k \mid |\underline{v}|_{max} = m\}.$$ 
$|\cdot|_{max}$ denotes the usual maximum norm of a vector. In order to avoid ambiguities we use the notation $|\cdot|_2$ for the Euclidean norm within this proof. 
We further consider the following subsets of this lattice $L_m := \{
\underline{v} \in \Lambda_k \mid |\underline{v}|_{max} \le m\}$. Obviously, the
set $L_m$ contains exactly $(2m+1)^n$ points. Hence, the
cardinality of $\Omega_m$ is $\sharp \Omega_m = (2m+1)^n -
(2m-1)^n$. The Euclidean distance between the set $\Omega_{m+1}$
and $\Omega_m$ has the value $d_m := dist_2(\Omega_{m+1},\Omega_m)
= 1$.  
\par\medskip\par
Now let us consider an arbitrary compact subset ${\cal{K}} \subset \R^n$.
Then there exists a positive $r \in \R$ such that all $x \in
{\cal{K}}$ satisfy $|x|_{max} \le |x|_2 < r$.
Suppose now that $x$ is a point of ${\cal{K}}$. To show the 
normal convergence of the series, we can leave out without loss of
generality a finite set of lattice points. We consider without loss of
generality only the summation over those lattice points that
satisfy $|\underline{v}|_{max} \ge [r]+1$. In view of $|x +
\underline{v}|_2 \ge |\underline{v}|_2 - |x|_2 \ge
|\underline{v}|_{max}-|x|_2 = m - |x|_2 \ge m - r$ we
obtain
\begin{eqnarray*}
& & \sum\limits_{m=[r]+1}^{+\infty} \sum\limits_{\omega \in \Omega_m} |E_{\alpha}(\underline{x}+\underline{v},x_{k+1},\ldots,x_{n-1},{\rm sgn}(\underline{v}))|_2\\
& \le & c \sum\limits_{m=[r]+1}^{+\infty} \sum\limits_{\omega \in \Omega_m}
 \frac{e^{-|\alpha||\underline{x}+\underline{v}+ x_{k+1}e_{k+1} + \cdots + {\rm sgn}(\underline{v}) x_n e_n|_2}}
 {\sqrt{(|\underline{x}+\underline{v}+x_{k+1} + x_{k+1}e_{k+1} + \cdots + {\rm sgn}(\underline{v}) x_n e_n|_2)^{n-1}}}\\
& \le & c \sum\limits_{m=[r]+1}^{+\infty} [(2m+1)^n - (2m-1)^n]
\frac{e^{-|\alpha|(r-m)}}{(m-r)^{(n-1)/2}},
\end{eqnarray*}
where $c$ is an appropriately chosen positive real constant,
because $m - r \ge [r]+1-r > 0$. This sum clearly is absolutely uniformly convergent. Hence,
the series $\wp_{\alpha}^{{\cal{M}}_k^-}(x)$, 
which can be written as
$$
\wp_{\alpha}^{{\cal{M}}_k^-}(x) :=
\sum\limits_{m=0}^{+\infty}\sum\limits_{\underline{v} \in \Omega_m}
E_{\alpha}(\underline{x} + \underline{v},x_{k+1},\ldots,x_{n-1},{\rm sgn}(\underline{v})x_n),
$$
converges normally on $\R^n \backslash \Lambda_k$.
\end{proof}
{\bf Remark}. The case $\alpha=0$ is the harmonic case treated in \cite{KraNonorientable}.
\par\medskip\par
Next we may establish that 
\begin{lemma}\label{reg}
The function $\wp_{\alpha}^{{\cal{M}}_k^-}(x)$ is an element from Ker $\Delta-\alpha^2$ on $\R^n\backslash \Lambda_k$.  
\end{lemma}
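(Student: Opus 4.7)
The plan is to show that $(\Delta-\alpha^2)$ can be applied termwise to the series defining $\wp_{\alpha}^{{\cal{M}}_k^-}(x)$, and that each resulting term vanishes. The conclusion would then follow from the uniform convergence of the differentiated series on compact subsets of $\R^n\setminus\Lambda_k$.

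First I would argue that each summand is in the kernel of $\Delta-\alpha^2$ away from the corresponding lattice point. The map $x \mapsto (\underline{x}+\underline{v},x_{k+1},\ldots,x_{n-1},\mathrm{sgn}(\underline{v})x_n)$ is a Euclidean isometry (a translation composed, in the case $\mathrm{sgn}(\underline{v})=-1$, with the reflection $x_n \mapsto -x_n$). Both the Laplacian and multiplication by the constant $\alpha^2$ commute with Euclidean isometries. Since $E_\alpha$ lies in the kernel of $\Delta-\alpha^2$ on $\R^n\setminus\{0\}$ by construction, each summand lies in the kernel on $\R^n$ minus the associated lattice point, and in particular on $\R^n\setminus\Lambda_k$.

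Next I would justify interchanging $\Delta-\alpha^2$ with the summation. This requires extending the estimate used in Lemma \ref{conv} to first and second partial derivatives of $E_\alpha$. Using the standard recursion $\frac{d}{dz}H_m^{(1)}(z)=H_{m-1}^{(1)}(z)-\frac{m}{z}H_m^{(1)}(z)$ together with the asymptotic formula (\ref{asy}), one obtains
$$
|\partial^\beta E_\alpha(x)| \le c_\beta \frac{e^{-\alpha|x|}}{|x|^{(n-1)/2}}
$$
for $|\beta|\le 2$ and $|x|$ sufficiently large, where $c_\beta$ is a real constant depending only on $\beta$. The exponential decay rate is unchanged because differentiating $e^{-\alpha|x|}$ only produces polynomial factors in $1/|x|$, which are absorbed into the constant after suitably enlarging $|x|$. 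Plugging this bound into the shell decomposition $\Lambda_k = \bigcup_m \Omega_m$ used in the previous lemma gives, for each fixed compact $\mathcal{K}\subset\R^n\setminus\Lambda_k$ and each multi-index $|\beta|\le 2$, a majorant of the form $c_\beta\sum_m [(2m+1)^k-(2m-1)^k]\frac{e^{-|\alpha|(m-r)}}{(m-r)^{(n-1)/2}}$ which is absolutely convergent. Hence the series $\sum_{\underline{v}\in\Lambda_k}\partial^\beta E_\alpha(\underline{x}+\underline{v},x_{k+1},\ldots,x_{n-1},\mathrm{sgn}(\underline{v})x_n)$ converges normally on $\R^n\setminus\Lambda_k$.

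A standard theorem on term-by-term differentiation then allows the interchange, yielding
$$
(\Delta-\alpha^2)\wp_{\alpha}^{{\cal{M}}_k^-}(x) = \sum_{\underline{v}\in\Lambda_k}(\Delta-\alpha^2)\,E_\alpha(\underline{x}+\underline{v},x_{k+1},\ldots,x_{n-1},\mathrm{sgn}(\underline{v})x_n) = 0
$$
on $\R^n\setminus\Lambda_k$. The only genuine technical point—and the expected main obstacle—is verifying the derivative estimate with the same exponential decay factor; once this is in place, the proof is essentially a repetition of the convergence argument from Lemma \ref{conv} applied to the differentiated series.
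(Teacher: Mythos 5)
Your proof is correct and reaches the same conclusion, but it takes a somewhat different route on the second step. For the first step — showing each summand lies in $\mathrm{Ker}\,(\Delta-\alpha^2)$ — your observation that the map $x \mapsto (\underline{x}+\underline{v},x_{k+1},\ldots,x_{n-1},\mathrm{sgn}(\underline{v})x_n)$ is a Euclidean isometry commuting with $\Delta$ and with multiplication by $\alpha^2$ is exactly what the paper says, phrased more concisely: the paper notes that the reflection $x_n\mapsto -x_n$ is absorbed by the two differentiations with respect to $x_n$ in $\Delta$, and that $\Delta-\alpha^2$ is translation-invariant. For the second step, the paper simply invokes Weierstra{\ss}' convergence theorem: a normally convergent series of solutions to a fixed constant-coefficient elliptic equation has a limit that is again a solution, with locally uniform convergence of all derivatives coming automatically from interior elliptic estimates (or from mean value/Poisson-kernel representations). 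You instead prove this from scratch for the specific series by differentiating the Hankel asymptotics and establishing that the series of first and second partial derivatives converges normally, then applying the elementary term-by-term differentiation theorem. Your route is more self-contained and makes the derivative bounds explicit; the paper's route is shorter because it offloads the derivative convergence to a standard theorem. Both are valid; the explicit derivative estimate you supply is correct and would indeed survive the same shell decomposition as in Lemma~\ref{conv}.
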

\begin{proof}
The main point is to observe that each term 
$$
E_{\alpha}(\underline{x} + \underline{v},x_{k+1},\ldots,{\rm sgn}(\underline{v})x_n)
$$
is an element from  Ker $\Delta-\alpha^2$ in $\mathbb{R}^n\backslash\{(\underline{v},0)\}$. This is trivial for $\underline{v}=\underline{0}$, since $E_{\alpha}(x)$ is the fundamental solution of $\Delta-\alpha^2$. Next, since we differentiate twice to each variable (also w.r.t. $x_n$), the term 
$E_{\alpha}(\underline{x},\ldots,x_{n-1},\pm x_n)$ turns out to be an element from Ker $\Delta-\alpha^2$, too. Finally, due to the invariance of  $\Delta-\alpha^2$ under translations of the form $x \mapsto x + \underline{v}$, each term  $E_{\alpha}(\underline{x} + \underline{v},x_{k+1},\ldots,{\rm sgn}(\underline{v})x_n)$ then is an element from that kernel, due to the previous arguments. 
Now we may conclude by Weierstra{\ss}' convergence theorem that the entire series $\wp_{\alpha}^{{\cal{M}}_k^-}(x)$ is an element from that kernel in all points of $\R^n \backslash \Lambda_k$, due to the normal convergence.   
\end{proof}
The next step is to observe 
\begin{lemma}\label{periodic}
The function $\wp_{\alpha}^{{\cal{M}}_k^-}(x)$ satisfies
\begin{equation}
\wp_{\alpha}^{{\cal{M}}_k^-}(\underline{x}+\underline{\eta},x_{k+1},\ldots,x_n) = \wp_{\alpha}^{{\cal{M}}_k^-}(\underline{x},x_{k+1}.\ldots,x_{n-1},{\rm sgn}(\underline{\eta})x_n) \quad\quad \forall \underline{\eta} \in \Lambda_k.
\end{equation}
\end{lemma}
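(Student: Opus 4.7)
The plan is to start from the defining series of the left-hand side and use the absolute convergence established in Lemma \ref{conv} to perform a lattice reindexing. By definition,
\begin{equation*}
\wp_\alpha^{{\cal M}_k^-}(\underline{x}+\underline{\eta}, x_{k+1},\ldots,x_n) \;=\; \sum_{\underline{v}\in\Lambda_k} E_\alpha\bigl(\underline{x}+\underline{\eta}+\underline{v},\, x_{k+1},\ldots,x_{n-1},\, {\rm sgn}(\underline{v})\,x_n\bigr).
\end{equation*}
I would substitute $\underline{w} := \underline{v}+\underline{\eta}$, which again ranges over all of $\Lambda_k$; in the first $k$ coordinates this turns $\underline{x}+\underline{\eta}+\underline{v}$ into $\underline{x}+\underline{w}$, as required by the form of the right-hand side.

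The crux is to track the sign attached to $x_n$. The key observation is that ${\rm sgn}:\Lambda_k\to\{\pm 1\}$ is a group homomorphism: writing $\underline{v}=\sum m_j e_j$, we have ${\rm sgn}(\underline{v})=(-1)^{m_1+\cdots+m_k}$, which is multiplicative under addition of lattice vectors, and obviously ${\rm sgn}(-\underline{\eta})={\rm sgn}(\underline{\eta})$. Hence ${\rm sgn}(\underline{v})={\rm sgn}(\underline{w}-\underline{\eta})={\rm sgn}(\underline{w})\,{\rm sgn}(\underline{\eta})$. Plugging this into the reindexed sum, the expression becomes
\begin{equation*}
\sum_{\underline{w}\in\Lambda_k} E_\alpha\bigl(\underline{x}+\underline{w},\, x_{k+1},\ldots,x_{n-1},\, {\rm sgn}(\underline{w})\,[{\rm sgn}(\underline{\eta})\,x_n]\bigr),
\end{equation*}
which is precisely the series defining $\wp_\alpha^{{\cal M}_k^-}(\underline{x},x_{k+1},\ldots,x_{n-1},\,{\rm sgn}(\underline{\eta})\,x_n)$, proving the claim.

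I do not expect a genuine obstacle here. The two things needing attention are (i) the legitimacy of rearranging the series, which is supplied by the normal convergence on compacta of $\R^n\setminus\Lambda_k$ from Lemma \ref{conv}, and (ii) the homomorphism property of ${\rm sgn}$, which is immediate from the parity formula. One could alternatively notice that $E_\alpha$ depends only on $|x|$ and is thus even in $x_n$, which makes the lemma even easier; nevertheless, the formulation given captures the covariance pattern that will be needed when the construction is later transplanted to pinor-valued sections, so it is worth proving in this form.
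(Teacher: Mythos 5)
Your argument is correct and is essentially identical to the paper's proof: both reindex the lattice sum via $\underline{v}'=\underline{v}+\underline{\eta}$ (justified by the normal convergence from Lemma~\ref{conv}) and use the multiplicativity of ${\rm sgn}$ on $\Lambda_k$ to pull out the factor ${\rm sgn}(\underline{\eta})$ on the $x_n$-slot. Your closing aside — that $E_\alpha$ depends only on $|x|$ and is hence even in $x_n$, rendering the identity almost tautological at this stage, while the covariant formulation is what actually carries over to the nontrivial pinor bundles — is a fair and useful observation, but it does not change the route taken.
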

\begin{proof}
Take an arbitrary $\underline{\eta} \in \Lambda_k$. Then 
\begin{eqnarray*}
& & \wp_{\alpha}^{{\cal{M}}_k^-}(\underline{x}+\underline{\eta},x_{k+1},\ldots,x_n)\\ &=&  \sum\limits_{\underline{v} \in \Lambda_k}  E_{\alpha}(\underline{x}+\underline{\eta}+\underline{v},x_{k+1},\ldots,x_{n-1},{\rm sgn}(\underline{v})x_n)\\
&=& \sum\limits_{\underline{v} \in \Lambda_k}  E_{\alpha}(\underline{x}+(\underline{v}+\underline{\eta}),x_{k+1},\ldots,x_{n-1},{\rm sgn}( \underline{\eta}){\rm sgn}(\underline{v}+\underline{\eta})x_n)\\
&=& \sum\limits_{\underline{v'} \in \Lambda_k}  E_{\alpha}(\underline{x}+\underline{v'},x_{k+1},\ldots,x_{n-1},{\rm sgn} (\underline{\eta}){\rm sgn}(\underline{v'})x_n)\\
&=& \wp_{\alpha}^{{\cal{M}}_k^-}(\underline{x},\ldots,x_{n-1},{\rm sgn}(\underline{\eta})x_n),
\end{eqnarray*}
where we applied a direct rearrangement argument. 
\end{proof}
Now let $p_-$ be the well-defined projection map from the universal covering space $\R^n$ down to the manifold ${\cal{M}}_k^- = \R^n/\sim$. 
As a consequence of Lemma~\ref{periodic} we may now conclude that the function $\wp_{\alpha}^{{\cal{M}}_k^-}(x)$ descends to a well-defined pinor section ${\wp'}_{\alpha}^{{\cal{M}}_k^-}(x'):= p_{-}(\wp_{\alpha}^{{\cal{M}}_k^-}(x))$ on the manifold ${\cal{M}}_k^-$ which takes its values in the trivial pinor bundle. Here, we use the notation $x':= p_{-}(x) = x \;{\rm mod}\; \sim$. The projection map $p'$ induces canonically a  Klein-Gordon operator ${\Delta'}_{\alpha} = p_-(\Delta-\alpha^2)$ on ${\cal{M}}_k^-$. As a consequence of Lemma~\ref{reg}, the section ${\wp'}_{\alpha}^{{\cal{M}}_k^-}(x')$ then is in Ker ${\Delta'}_{\alpha}$. Finally, we are in position to establish the main result of this paragraph 
\begin{theorem}  
Consider the decomposition of the lattice $\Lambda_k = \Lambda_{l} \oplus \Lambda_{k-l}$ for some $l \in \{1,\ldots,k\}$ and write a lattice point $\underline{v} \in \Lambda_k$ in the form $\underline{v} = m_1 e_1 + \cdots+ m_l e_l + m_{l+1}e_{l+1} + \cdots + m_k e_k$ with integers $m_1,\ldots,m_k \in \mathbb{Z}$.   
Let ${\cal{E}}^{(q)}$ be the specific pinor bundle on ${\cal{M}}_k^-$ defined by the map
$$(\underline{x}+\underline{v},x_{k+1},\ldots,x_n,X) \; {\rm to}\; (\underline{x}, x_{k+1},...,x_{n-1},{\rm sgn}(v) x_n,(-1)^{m_1+\cdots+m_k}X).$$ The fundamental solution to the Klein-Gordon operator on ${\cal{M}}_k^-$ for pinor sections with values in the pinor bundle ${\cal{E}}^{(q)}$ can be expressed by 
\begin{equation}
{E'}_{\alpha,q}(x') = p_-\Bigg(\sum\limits_{\underline{v}   \in \Lambda_l \oplus \Lambda_{k-l}} (-1)^{m_1+\cdots+m_l} E_{\alpha}(\underline{x}+\underline{v},x_{k+1},\ldots,x_{n-1},{\rm sgn}(v)x_n)  \Bigg).
\end{equation}
\end{theorem}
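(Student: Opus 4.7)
The strategy is to reproduce, in twisted form, the three lemmas above (convergence, annihilation by $\Delta-\alpha^2$, pseudo-periodicity) and then verify the local singularity condition that pins down a fundamental solution. Since the inserted weights $(-1)^{m_1+\cdots+m_l}$ have modulus $1$, the termwise majorant used in Lemma~\ref{conv} is unchanged, so the twisted series converges normally on every compact subset of $\R^n\setminus\Lambda_k$. Each summand $(-1)^{m_1+\cdots+m_l}\,E_{\alpha}(\underline{x}+\underline{v},x_{k+1},\ldots,x_{n-1},{\rm sgn}(\underline{v})x_n)$ is annihilated by $\Delta-\alpha^2$ on its region of smoothness by exactly the argument of Lemma~\ref{reg}: translation invariance and invariance of $\Delta-\alpha^2$ under $x_n\mapsto -x_n$, together with the fact that scalar signs commute with the differential operator. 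Weierstra{\ss}' theorem then propagates this to the whole twisted sum on $\R^n\setminus\Lambda_k$.

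The key step is the transformation law under the lattice action. Given $\underline{\eta}=\eta_1 e_1+\cdots+\eta_k e_k \in \Lambda_l\oplus\Lambda_{k-l}$, I would translate $\underline{x}\mapsto \underline{x}+\underline{\eta}$ and reindex the sum by $\underline{v'}=\underline{v}+\underline{\eta}$ exactly as in the proof of Lemma~\ref{periodic}. Combining the identity ${\rm sgn}(\underline{v})={\rm sgn}(\underline{\eta})\,{\rm sgn}(\underline{v'})$ used there with the splitting $(-1)^{m_1+\cdots+m_l}=(-1)^{\eta_1+\cdots+\eta_l}\,(-1)^{m'_1+\cdots+m'_l}$ for the twist factor, the translated series equals $(-1)^{\eta_1+\cdots+\eta_l}$ times the original series evaluated at $(\underline{x},x_{k+1},\ldots,x_{n-1},{\rm sgn}(\underline{\eta})x_n)$. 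This is precisely the transition rule defining sections of ${\cal{E}}^{(q)}$, so the function descends under $p_-$ to a well-defined Clifford-algebra valued section of ${\cal{E}}^{(q)}$ on ${\cal{M}}_k^-$ that lies in Ker$\,{\Delta'}_\alpha$ away from the image of the lattice.

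To finish, I would observe that within any fundamental domain of the $\Lambda_k$-action only the lattice vector $\underline{v}=\underline{0}$ contributes a singularity, and there the coefficient is $(-1)^0=1$ with ${\rm sgn}(\underline{0})=1$. Hence the twisted series equals $E_\alpha(x)$ plus a smooth Klein-Gordon solution in a neighbourhood of the origin, so ${E'}_{\alpha,q}$ has exactly the same local delta source at the image of $0$ in ${\cal{M}}_k^-$ as $E_\alpha$ does at $0\in\R^n$; this identifies it as the Green's kernel of ${\Delta'}_\alpha$ for ${\cal{E}}^{(q)}$-valued sections. I expect the main obstacle to be the sign bookkeeping in the middle paragraph: the pinor twist $(-1)^{m_1+\cdots+m_l}$ refers only to the first $l$ coordinates of $\underline{v}$, whereas ${\rm sgn}(\underline{v})$ depends on all $k$ coordinates in a way that is not multiplicative on arbitrary pairs, so the reindexing needs to be carried out separately on the two summands $\Lambda_l$ and $\Lambda_{k-l}$ in order to reproduce the prescribed transition function of ${\cal{E}}^{(q)}$ rather than a spurious extra sign.
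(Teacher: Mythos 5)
Your plan reproduces the paper's own argument: the paper's proof simply appeals to Lemmas~\ref{conv}, \ref{reg}, \ref{periodic} after noting that the weights $(-1)^{m_1+\cdots+m_l}$ preserve their conclusions and that the periodization inherits the Green's reproducing property of $E_\alpha(x-y)$. You spell out the same steps (unaffected convergence and annihilation, the Lemma~\ref{periodic}-style reindexing, the local singularity analysis at $\underline{v}=\underline{0}$), so this is essentially the paper's route with the details filled in.

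One comment on the obstacle you flag at the end. The twist factor $\underline{v}\mapsto(-1)^{m_1+\cdots+m_l}$ \emph{is} a group homomorphism $\Lambda_k\to\{\pm1\}$, so the reindexing $\underline{v}\mapsto\underline{v'}=\underline{v}+\underline{\eta}$ splits it cleanly as $(-1)^{m'_1+\cdots+m'_l}(-1)^{\eta_1+\cdots+\eta_l}$; splitting the sum over $\Lambda_l$ and $\Lambda_{k-l}$ separately is harmless but not needed for that. The genuine bookkeeping concern is elsewhere: the identity ${\rm sgn}(\underline{v})={\rm sgn}(\underline{\eta})\,{\rm sgn}(\underline{v}+\underline{\eta})$, which you (like Lemma~\ref{periodic}) invoke in the rearrangement, requires ${\rm sgn}$ to be multiplicative on $\Lambda_k$; with the paper's stated convention (${\rm sgn}(\underline{v})=1$ iff $\underline{v}\in 2\Lambda_k$) this fails as soon as $k\ge 2$ (e.g.\ $\underline{v}=e_1$, $\underline{\eta}=e_2$), and it is restored only if one reads ${\rm sgn}(\underline{v})=(-1)^{m_1+\cdots+m_k}$. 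That issue is inherited wholesale from Lemma~\ref{periodic} and is not introduced by the twist, so your proposed fix of reindexing the two sublattices separately does not touch it; the twist itself is not the source of any extra sign.
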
 
In the case of the trivial bundle we shall simply write ${E'}_{\alpha}(x')$ in what follows.  
\begin{proof}
With the modification of the appropriate minus sign induced by the factor $(-1)^{m_1+\cdots+m_l}$ in each term of the series we ensure in combination with the previous lemmas that ${E'}_{\alpha}(x')$ actually is a well-defined section on ${\cal{M}}_k^-$ and it takes values in the chosen pinor bundle ${\cal{E}}^{(q)}$. With analogous arguments as used in the proof of Lemma~\ref{reg}, this section is in the kernel of the Klein-Gordon operator. Since the expression $E_{\alpha}(x-y)$ is the Green's kernel of the usual Klein-Gordon operator in the Euclidean flat space $\R^n$, its canonical projection ${E'}_{\alpha}(x')$ takes over this role on the manifold ${\cal{M}}_k^-$. 
It reproduces each section in the kernel of $\Delta'_{\alpha}$ by the Green's integral.    
\end{proof}
We can say more.  These explicit formulas for the fundamental solutions allow us to express the solutions of the homogeneous and inhomogeneous Klein-Gordon problem with and without boundary conditions.
\par\medskip\par
Let us suppose that $D' \subset {\cal{M}}_k^-$ is a bounded domain with a $C^2$-boundary. Let $\nu$ denote the unit normal vector to the boundary $\partial D$ directed into the exterior of $D'$. Assume further that $u'$ is a $C^2$-section in the kernel of ${\Delta'}_{\alpha} = p_-(\Delta-\alpha^2)$ with values in the pinor bundle ${\cal{E}}^{(q)}$. Since ${\Delta'}_{\alpha}$ is a scalar operator, we may restrict to consider each scalar component function ${u_j}'$  of the pinor valued section $u'$ separately. 

Suppose that $u$ has a normal derivative on $\partial D'$ in the sense that the limit 
$$
\frac{\partial {u_j}'}{\partial \nu}(y') = \lim\limits_{h \to 0^+} \nu(y') \cdot {\rm grad}\; {u_j}'(y'-h\nu(y')),\quad y' \in \partial D' 
$$ 
exists uniformely on $\partial D'$ for each scalar component function ${u_j}'$. Then we can infer by classical harmonic analysis arguments that each scalar component function satisfies
\begin{equation}\label{homcomp}
{u_j}'(x') = \int\limits_{\partial D'} \Big[ \frac{\partial {u_j}'}{\partial \nu}(y'){E'}_{\alpha,q}(x'-y') - {u_j}'(y')
\frac{\partial {E'}_{\alpha,q}(x'-y')}{\partial \nu(y')}\Big] dS(y).
\end{equation}
Therefore, the whole pinor valued section $u'$ satisfies 
\begin{equation}\label{hom}
{u'}(x') = \int\limits_{\partial D'} \Big[ \frac{\partial {u'}}{\partial \nu}(y'){E'}_{\alpha,q}(x'-y') - {u'}(y')
\frac{\partial {E'}_{\alpha,q}(x'-y')}{\partial \nu(y')}\Big] dS(y).
\end{equation}
Next, as a further consequence we can directly present the solution of the inhomogeneous Klein-Gordon problem $\Delta'_{\alpha} u' = f'$ in $D'$ with no-boundary condition in terms of the convolution integral over the fundamental section ${E'}_{\alpha,q}$. Under the weaker condition that $f'$ is an ${\cal{E}}^{(q)}$ valued pinor section belonging to $W^{p}_{2}(D')$ for some $p > 0$ we have   
\begin{equation}\label{inh}
u'(x') = \int\limits_{D'} {E'}_{\alpha,q}(x'-y') f'(y') dV(y').
\end{equation}  
Here, and in the following $W^{p}_2(D')$ stands for the Sobolev space consisting of the $p$-times weakly differentiable ${\cal{E}}^{(q)}$-valued sections belonging to $L_2(D')$. 
\par\medskip\par
In view of the linearity of the Klein-Gordon operator, we obtain from combining the latter statement with a weaker version of the previous one where we work in the same Sobolev spaces as in \cite{GS2}, the following statement. 
\begin{theorem} (Inhomogeneous Klein-Gordon boundary value problem on ${\cal{M}}_k^-$). 
Let $D' \subset {\cal{M}}_k^-$ be a bounded domain with a $C^2$-boundary. Suppose that $f' \in W^{p}_2(D')$ and that $g' \in W^{p+3/2}_2(\partial D')$. Then the boundary value problem $\Delta'_{\alpha} u' = f'$ in $D'$ with $u'|_{\partial D'} = g'$ has a unique solution in $W^{p+2,loc}_2$ of the form  
\begin{eqnarray*}
u'(x') &=& \int\limits_{D'} {E'}_{\alpha,q}(x'-y') f'(y') dV(y') \\&+& \int\limits_{\partial D'} \Big[ \frac{\partial u'}{\partial \nu}(y'){E'}_{\alpha,q}(x'-y') - u'(y')
\frac{\partial {E'}_{\alpha,q}(x'-y')}{\partial \nu(y')}\Big] dS(y).
\end{eqnarray*}
\end{theorem}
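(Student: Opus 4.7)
The approach is to exploit the linearity of $\Delta'_\alpha = p_-(\Delta-\alpha^2)$ by splitting the unknown into a Newton-type volume potential $u'_V(x') := \int_{D'} {E'}_{\alpha,q}(x'-y') f'(y')\, dV(y')$ that absorbs the source, and a correction $u'_H \in \operatorname{Ker}\Delta'_\alpha$ that repairs the prescribed trace. By the very construction of ${E'}_{\alpha,q}$ as the fundamental section of $\Delta'_\alpha$ on ${\cal{M}}_k^-$ (previous theorem), one has $\Delta'_\alpha u'_V = f'$ distributionally on $D'$. The Sobolev regularity $u'_V \in W^{p+2,loc}_2(D')$ follows componentwise from the classical elliptic estimates for the screened Poisson equation, since in a chart neighborhood ${E'}_{\alpha,q}$ differs from $E_\alpha$ only by a smooth remainder, the contributions from $\underline{v}\ne 0$ in the defining series being harmonic away from the respective lattice points.

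Setting $h' := g' - u'_V|_{\partial D'}$, which still lies in $W^{p+3/2}_2(\partial D')$ by continuity of the trace, I would solve the auxiliary homogeneous Dirichlet problem $\Delta'_\alpha u'_H = 0$, $u'_H|_{\partial D'} = h'$. This is well-posed because $\alpha^2>0$ makes $-\Delta'_\alpha$ strictly coercive on the zero-trace Sobolev space, so existence of $u'_H$ follows from Lax--Milgram componentwise and its regularity from interior elliptic estimates. The candidate solution is $u' := u'_V + u'_H$. To match the announced integral representation I would apply the generalized Green's identity for $\Delta'_\alpha$, i.e.\ the inhomogeneous analogue of (\ref{hom}), obtained by the standard integration-by-parts argument after excluding a small ball around the singularity $y'=x'$:
\[
u'(x') = \int_{D'} {E'}_{\alpha,q}(x'-y')\,(\Delta'_\alpha u')(y')\, dV(y') + \int_{\partial D'} \Bigl[\frac{\partial u'}{\partial \nu}(y'){E'}_{\alpha,q}(x'-y') - u'(y')\frac{\partial {E'}_{\alpha,q}(x'-y')}{\partial \nu(y')}\Bigr] dS(y).
\]
Since $\Delta'_\alpha u' = f'$ in $D'$ and $u'|_{\partial D'} = g'$, this is exactly the formula in the statement.

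Uniqueness reduces to showing that any $W^{p+2,loc}_2$ solution $w'$ of the homogeneous Dirichlet problem is zero. Testing each scalar component $w'_j$ against itself and integrating by parts gives
\[
\int_{D'} \bigl(|\Grad w'_j|^2 + \alpha^2 |w'_j|^2\bigr)\, dV = 0,
\]
which forces $w'\equiv 0$ because $\alpha\neq 0$. The main obstacle I anticipate is rigorous control of the boundary integrals as $x'$ approaches $\partial D'$, namely verifying the jump relations for the single- and double-layer potentials built from ${E'}_{\alpha,q}$ so that the representation indeed recovers the trace $g'$ rather than a multiple thereof. Because ${E'}_{\alpha,q}$ has the same local singular structure as the Euclidean kernel $E_\alpha$ (all other summands in the defining series being smooth near the diagonal), the classical jump relations from \cite{GS2} transfer to the present geometric setting via the local diffeomorphism $p_-$, and the remainder of the argument is bookkeeping.
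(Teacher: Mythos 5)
Your argument is correct and mirrors the paper's own (very brief) justification, which simply invokes the linearity of $\Delta'_{\alpha}$ to superpose the volume potential of \eqref{inh} with the Green boundary representation \eqref{hom}, working in the Sobolev framework of \cite{GS2}. You supply more detail than the paper (the explicit split $u'=u'_V+u'_H$, the Lax--Milgram existence of $u'_H$, and the energy-identity uniqueness argument), but the underlying decomposition and use of the generalized Green's identity for ${E'}_{\alpha,q}$ are exactly what the paper intends.
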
 
 
\section{Higher dimensional generalizations of the Klein bottle}
Now we turn to discuss analogous constructions for higher dimensional generalizations of the Klein bottle. To leave it simple we consider an $n$-dimensional normalized lattice of the form $\Lambda_n:=\mathbb{Z} e_1 + \cdots + \mathbb{Z} e_n$.   
\par\medskip\par
We introduce higher dimensional generalization of the classical Klein bottle by the factorization
$$
{\cal{K}}_n:=\mathbb{R}^n/\sim^*
$$
where $\sim^*$ is now defined by the map
$$
(\underline{x} + \sum_{i=1}^{n-1} m_i e_i +(x_n + m_n) e_n) \mapsto(x_1,\cdots,x_{n-1},(-1)^{m_n} x_n).
$$
The manifolds ${\cal{K}}_n$ can be described as the set of orbits of the group action $\Lambda_n \times \R^n \to \R^n$ where the action is now defined by 
$$
v \circ x := (\sum\limits_{i=1}^{k-1} x_i e_i+\sum_{i=1}^{k-1}m_i e_i+((-1)^{m_k} x_k + m_k) e_k,x_{k+1},\ldots,x_{n-1}, x_n),
$$
where $v = m_1 e_1 + \cdots + m_n e_n$ is a lattice point from $\Lambda_n$. 
Here, and in the remaining part of this section, $\underline{x}$ denotes a shortened vector in $\mathbb{R}^{n-1}$. In the case $n=2$ we obtain the classical Klein bottle. Notice that in contrast to the M\"obius strips treated in the previous section, here the minus sign switch occurs in one of the component on which the period lattice acts, too. As for the M\"obius strips we can again set up distinct pinor bundles. See also \cite{BRS} where pin structures of the classical four-dimensional Klein-bottle have been considered.  

By decomposing the complete $n$-dimensional lattice $\Lambda_n$ into a direct sum of two sublattices $\Lambda_n = \Omega_l \oplus \Lambda_{n-l}$ we can again construct $2^n$ distinct pinor bundles by considering the maps
$$
(\underline{x} + \sum_{i=1}^{n-1} m_i e_i, x_n + m_n,X) \mapsto(x_1,\cdots,x_{n-1},(-1)^{m_n} x_n,(-1)^{m_1+\cdots+m_l}X).
$$
By similar arguments as applied in the previous section we may establish 
\begin{theorem}  
Consider the decomposition of the lattice $\Lambda_n = \Lambda_{l} \oplus \Lambda_{n-l}$ for some $l \in \{1,\ldots,n\}$ and write a lattice point $v \in \Lambda_n$ in the form $v = m_1 e_1 + \cdots+ m_l e_l + m_{l+1}e_{l+1} + \cdots + m_n e_n$ with integers $m_1,\ldots,m_n \in \mathbb{Z}$.   

Let ${\cal{E}}^{(q)}$ be the pinor bundle on ${\cal{K}}_n$ defined by the map $$
(\underline{x} + \sum_{i=1}^{n-1} m_i e_i, x_n + m_n,X) \mapsto(x_1,\cdots,x_{n-1},(-1)^{m_n} x_n,(-1)^{m_1+\cdots+m_l}X).
$$
The fundamental solution of the Klein-Gordon operator on ${\cal{K}}_n$ (induced by $p_*(\Delta-\alpha^2)$ ) for sections with values 
in the pinor bundle ${\cal{E}}^{(q)}$ can be expressed by 
\begin{equation}
{E'}_{\alpha,q}(x') = p_*\Bigg(\sum\limits_{v   \in \Lambda_l \oplus \Lambda_{n-l}} (-1)^{m_1+\cdots+m_l} E_{\alpha}(\sum_{i=1}^{n-1} (x_i + m_i) e_i,(-1)^{m_n}x_n+x_m)  \Bigg),
\end{equation}
where $p_*$ now denotes the projection from $\R^n$ to ${\cal{K}}_n = \R^n/\sim^*$. The symbol $'$ represents the image under $p_*$ up from now.
\end{theorem}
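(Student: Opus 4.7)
The plan is to parallel the proof developed for $\mathcal{M}_k^-$ in Lemmas \ref{conv}, \ref{reg}, and \ref{periodic}, and to carry it out in four steps: normal convergence of the series on $\mathbb{R}^n\setminus\Lambda_n$, membership in $\mathrm{Ker}(\Delta-\alpha^2)$, pseudo-periodicity with respect to the Klein-bottle equivalence $\sim^*$, and identification of the projection as the Green's kernel on $\mathcal{K}_n$.

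First I would establish the normal convergence. The same exponential-decay asymptotic $|E_\alpha(x)|\le c\,e^{-\alpha|x|}/|x|^{(n-1)/2}$ and the shell decomposition $\Lambda_n=\bigcup_m\Omega_m$ used in the proof of Lemma~\ref{conv} apply: because $E_\alpha$ is radial and $|(-1)^{m_n}x_n+m_n|$ still grows like $|m_n|$ on each shell, the shell-majorant argument carries over verbatim, and the $\pm 1$ prefactors $(-1)^{m_1+\cdots+m_l}$ do not disturb absolute convergence. Harmonicity is then immediate: each summand is $E_\alpha$ composed with a translation and, for odd $m_n$, the reflection $x_n\mapsto-x_n$, and $\Delta-\alpha^2$ is invariant under both, so Weierstra{\ss}' convergence theorem transfers the kernel property to the sum exactly as in Lemma~\ref{reg}.

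The technical core is pseudo-periodicity. The generator of $\sim^*$ defines a group action $v\bullet x:=(x_1+m_1,\ldots,x_{n-1}+m_{n-1},(-1)^{m_n}x_n+m_n)$ whose composition law is ordinary addition in the first $n-1$ slots but $(v\cdot w)_n=m_n+(-1)^{m_n}m_n'$ in the last (the infinite dihedral structure). A short direct computation gives $v\bullet(\eta\bullet x)=(v\cdot\eta)\bullet x$, so that substituting $w=v\cdot\eta$ in the series and tracking how the parities of the components of $v=w\cdot\eta^{-1}$ relate to those of $w$ yields
$$
\wp^*_\alpha(\eta\bullet x)=(-1)^{\ell_1+\cdots+\ell_l}\,\wp^*_\alpha(x)\qquad\forall\,\eta=\ell_1 e_1+\cdots+\ell_n e_n\in\Lambda_n,
$$
which is precisely the transformation law for a section of $\mathcal{E}^{(q)}$.

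The main obstacle will be the parity bookkeeping in the $n$-th slot when $\ell_n$ is odd, and especially when $l=n$: the sign flip and the integer shift are intertwined by the group law, and one must verify that the $n$-th component of $w\cdot\eta^{-1}$ has the same parity as $w_n+\ell_n$ so that $(-1)^{m_1+\cdots+m_l}$ transforms uniformly into $(-1)^{\ell_1+\cdots+\ell_l}(-1)^{w_1+\cdots+w_l}$ in both cases $l<n$ and $l=n$. Once pseudo-periodicity is established, ${E'}_{\alpha,q}$ descends via $p_*$ to a well-defined pinor section of $\mathcal{E}^{(q)}$ on $\mathcal{K}_n$ lying in the kernel of $p_*(\Delta-\alpha^2)$, and its only singularity on $\mathcal{K}_n$, inherited from the $v=0$ term $E_\alpha(x)$, is the correct point singularity of order $n-2$; it therefore serves as the Green's kernel and reproduces every solution via the analogues of (\ref{homcomp})--(\ref{inh}).
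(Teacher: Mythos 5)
Your proposal is correct and follows the same overall outline as the paper: normal convergence and membership in $\mathrm{Ker}(\Delta-\alpha^2)$ are transported verbatim from Lemmas~\ref{conv} and \ref{reg}, the pseudo-periodicity is the crux, and the Green's property then follows from the Euclidean one. Where you diverge is in how you organize the pseudo-periodicity step. The paper works with the trivial bundle only, writes the sum as a union over even and odd $m_n$, proves separately the $\Lambda_{n-1}$-invariance in the first $n-1$ slots and the single relation $\wp(x_1,\ldots,x_n+1)=\wp(x_1,\ldots,-x_n)$, and finishes by induction on $k_n$, remarking that the twisted bundles follow ``by taking into account the parity factor.'' You instead encode the identifications as an affine group action $v\bullet x$ with an infinite-dihedral composition law in the last slot and reindex the sum by $w=v\cdot\eta$. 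Both are rearrangement arguments justified by the normal convergence; yours has the advantage of treating all $k_n$ at once (no induction) and of forcing the parity check that $(w\cdot\eta^{-1})_n\equiv w_n+\eta_n\pmod 2$, which is exactly the detail that makes the adaptation to the nontrivial bundles (including the delicate case $l=n$) rigorous rather than merely asserted. So you are not only reaching the same conclusion by essentially the same route; you are filling in the one bookkeeping step the paper leaves to the reader.

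One small caution: the display in the statement has the obvious typo $x_m$ for $m_n$ in the last argument of $E_\alpha$, and you correctly read the summand as $E_\alpha(x_1+m_1,\ldots,x_{n-1}+m_{n-1},(-1)^{m_n}x_n+m_n)$. Make sure that is what you verify the pseudo-periodicity for, since the group-law computation depends on the $m_n$ appearing \emph{outside} the sign flip.
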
 
To the proof, without loss of generality we consider the trivial bundle, as the arguments can easily be adapted to the other bundles that we also considered, namely by taking into account the parity factor $(-1)^{m_1+\cdots+m_l}$. Completely analogously to the proof of Lemma~\ref{conv} we may establish the normal convergence of the series 
\begin{equation}
\wp_{\alpha}^{{\cal{K}}_n}(x) := \sum\limits_{v   \in \Lambda_{n}} E_{\alpha}(\sum_{i=1}^{n-1} (x_i + m_i) e_i,(-1)^{m_n}x_n+x_m)
\end{equation}
in $\R^n \backslash \Lambda_n$. With the same argumentation as in Lemma~\ref{reg} we may establish that the function $\wp_{\alpha}^{{\cal{K}}_n}$ is an element from Ker $\Delta-\alpha^2$ in $\R^n \backslash \Lambda_n$.  The main point that remains to show is 
\begin{lemma} For all $k:=k_1 e_1 + \cdots + k_n e_n  \in \Lambda_n$ we have 
$$
\wp_{\alpha}^{{\cal{K}}_n}(x+k) = \wp_{\alpha}^{{\cal{K}}_n}(x_1,\ldots,x_{n-1},(-1)^{k_n} x_n).
$$ 
\end{lemma}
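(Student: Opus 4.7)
I would prove the identity by a change of summation variables in the lattice sum defining $\wp_{\alpha}^{{\cal{K}}_n}(x+k)$, followed by reindexing. This is legitimate because the series converges normally on compact subsets of $\R^n \setminus \Lambda_n$ (by the same argument as in Lemma~\ref{conv}), so arbitrary rearrangements preserve the sum.

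First I would write out the $v$-th summand of $\wp_{\alpha}^{{\cal{K}}_n}(x+k)$ explicitly. The first $n-1$ coordinates of the argument of $E_\alpha$ simply become $x_i + k_i + m_i$, while the $n$-th coordinate is
$$(-1)^{m_n}(x_n + k_n) + m_n = (-1)^{m_n} x_n + \bigl((-1)^{m_n} k_n + m_n\bigr).$$
This form suggests the substitution $m_i' := m_i + k_i$ for $i = 1,\ldots,n-1$ together with the \emph{nonlinear} shift $m_n' := m_n + (-1)^{m_n} k_n$ in the last coordinate.

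The next step is to check that $(m_1,\ldots,m_n)\mapsto(m_1',\ldots,m_n')$ is a bijection of $\Lambda_n$. The first $n-1$ components are ordinary translations. For the last component, one splits according to the parity of $m_n$: on even $m_n$ the map is translation by $+k_n$ and on odd $m_n$ it is translation by $-k_n$, and a short parity count (distinguishing $k_n$ even from $k_n$ odd) shows these two partial maps together cover $\mathbb{Z}$ bijectively. The important algebraic output of this step is the parity identity $(-1)^{m_n'} = (-1)^{m_n}(-1)^{k_n}$, equivalently $(-1)^{m_n} = (-1)^{m_n'}(-1)^{k_n}$.

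Using this identity, the $n$-th coordinate rewrites as $(-1)^{m_n'}\bigl((-1)^{k_n} x_n\bigr) + m_n'$, so each summand of $\wp_{\alpha}^{{\cal{K}}_n}(x+k)$ becomes exactly the $v'$-th summand of $\wp_{\alpha}^{{\cal{K}}_n}(x_1,\ldots,x_{n-1},(-1)^{k_n} x_n)$. Summing over $v'\in\Lambda_n$ and invoking normal convergence finishes the proof. The main obstacle is precisely the nonlinear character of the substitution in the last coordinate: unlike the M\"obius-strip lemma (Lemma~\ref{periodic}), where the lattice translation and the sign acted on disjoint coordinates and the shift $\underline{v}\mapsto\underline{v}+\underline{\eta}$ decoupled cleanly from the sign factor, here the translation by $m_n$ and the sign $(-1)^{m_n}$ both act on $x_n$ and are entangled, so one must introduce the parity-dependent shift and verify bijectivity case-by-case before the reindexing argument goes through.
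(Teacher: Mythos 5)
Your proposal is correct, and it reaches the conclusion by a more compressed route than the paper. Both proofs hinge on the same combinatorial fact: a lattice translation by $m_n$ and the sign $(-1)^{m_n}$ interact with $x_n$ in a way that forces a parity-dependent reindexing in the last coordinate. The paper handles this in three stages: first it disposes of the translation in coordinates $1,\ldots,n-1$ by the trivial substitution $p_i = m_i + k_i$; then it proves the special case $k_n = 1$ by splitting the sum according to the parity of $m_n$ (even $m_n$ maps to odd $p_n = m_n+1$ with a $+x_n$ term, odd $m_n$ to even $q_n = m_n-1$ with a $-x_n$ term); finally it extends to arbitrary $k_n$ by induction. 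You instead perform a single global change of variable $v \mapsto v'$ with the nonlinear, parity-dependent shift $m_n' = m_n + (-1)^{m_n}k_n$ in the last slot, verify that this is a bijection of $\mathbb{Z}$ by the same parity dichotomy, and use the identity $(-1)^{m_n'} = (-1)^{m_n}(-1)^{k_n}$ to rewrite the last argument as $(-1)^{m_n'}\bigl((-1)^{k_n}x_n\bigr)+m_n'$. Your route buys a shorter, induction-free argument and makes the entanglement between translation and reflection fully explicit in one formula; the paper's staged approach buys smaller, individually transparent reindexing steps and mirrors its own expository decomposition of $\wp_\alpha^{{\cal K}_n}$ into even and odd $m_n$ sub-series. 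Both are valid, and both correctly invoke normal convergence (Lemma~\ref{conv}) to justify the rearrangements.
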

\begin{proof}
To prove this statement it is important to use the following decomposition  
\begin{eqnarray*}
\wp_{\alpha}^{{\cal{K}}_n}(x) &=& \sum\limits_{(m_1,\ldots,m_n) \in \mathbb{Z}^n} E_{\alpha}(x_1+m_1,\ldots,x_{n-1} + m_{n-1},(-1)^{m_n} x_n + m_n) \\
&=& \sum\limits_{(m_1,\ldots,m_{n-1}) \in \mathbb{Z}^{n-1},m_n \in 2 \mathbb{Z}} E_{\alpha}(x_1+m_1,\ldots,x_{n-1} + m_{n-1}, x_n + m_n) \\
&=& \sum\limits_{(m_1,\ldots,m_{n-1}) \in \mathbb{Z}^{n-1},m_n \in 2 \mathbb{Z}+1} E_{\alpha}(x_1+m_1,\ldots,x_{n-1} + m_{n-1}, -x_n + m_n). 
\end{eqnarray*}
First we note that 
$$
\wp_{\alpha}^{{\cal{K}}_n}(x_1+k_1,\ldots,x_{n-1}+k_{n-1}, x_n) = \wp_{\alpha}^{{\cal{K}}_n}(x_1,\ldots,x_n)
$$
for all $(k_1,\ldots,k_{n-1}) \in \mathbb{Z}^{n-1}$. 
This follows by the direct series rearrangement 
\begin{eqnarray*}
& & \wp_{\alpha}^{{\cal{K}}_n}(x_1+k_1,\ldots,x_{n-1}+k_{n-1},x_n)\\
&=& \sum\limits_{(m_1,\ldots,m_n) \in \mathbb{Z}^n} E_{\alpha}(x_1+k_1+m_1,\ldots,x_{n-1}+k_{n-1}+m_{n-1},(-1)^{m_n} x_n + m_n)\\
&=& \sum\limits_{(p_1,\ldots,p_n) \in \mathbb{Z}^n} E_{\alpha} (x_1+p_1,\ldots,x_{n-1} + p_{n-1},(-1)^{p_n} x_n+p_n)
\end{eqnarray*}
where we put $p_i:= m_i + k_i \in \mathbb{Z}$ for $i=1,\ldots,n-1$ and $p_n := m_n$. Notice that rearrangement is allowed because the series converges normally on $\R^n\backslash \Lambda_n$. 
\par\medskip\par
It thus suffices to show 
$$
\wp_{\alpha}^{{\cal{K}}_n}(x_1,\ldots,x_{n-1},x_n+1)= \wp_{\alpha}^{{\cal{K}}_n}(x_1,\ldots,x_{n-1},-x_n).
$$
We observe that 
\begin{eqnarray*}
& &\wp_{\alpha}^{{\cal{K}}_n}(x_1,\ldots,x_{n-1},x_n+1) \\
&=& \sum\limits_{(m_1,\ldots,m_n) \in \mathbb{Z}^n} E_{\alpha}(x_1+m_1,\ldots,x_{n-1}+m_{n-1},(-1)^{m_n}(x_n+1)+m_n)\\
&=&  \sum\limits_{(m_1,\ldots,m_{n-1}) \in \mathbb{Z}^{n-1},m_n \in 2 \mathbb{Z}} E_{\alpha}(x_1+m_1,\ldots,x_{n-1}+m_{n-1}, x_n+\underbrace{m_{n}+1}_{odd})\\
&+&  \sum\limits_{(m_1,\ldots,m_{n-1}) \in \mathbb{Z}^{n-1},m_n \in 2 \mathbb{Z}+1} E_{\alpha}(x_1+m_1,\ldots,x_{n-1}+m_{n-1}, -x_n+\underbrace{m_{n}-1}_{even})\\
&=&  \sum\limits_{(p_1,\ldots,p_{n-1}) \in \mathbb{Z}^{n-1},p_n \in 2 \mathbb{Z}+1} E_{\alpha}(x_1+p_1,\ldots,x_{n-1}+p_{n-1}, x_n+p_n)\\
&=&  \sum\limits_{(p_1,\ldots,p_{n-1}) \in \mathbb{Z}^{n-1},q_n \in 2 \mathbb{Z}} E_{\alpha}(x_1+p_1,\ldots,x_{n-1}+p_{n-1}, -x_n+q_n)\\
&=& \wp_{\alpha}^{{\cal{K}}_n}(x_1,\ldots,x_{n-1},-x_n). 
\end{eqnarray*}
The fact that 
$$
\wp_{\alpha}^{{\cal{K}}_n}(x_1,\ldots,x_{n-1},x_n+k_n) = \wp_{\alpha}^{{\cal{K}}_n}(x_1,\ldots,x_{n-1},(-1)^{k_n} x_n)  
$$
is true for all $k_n \in \mathbb{Z}$ now follows by a direct induction argument on $k_n$. 
\end{proof}
With this property we may infer that $\wp_{\alpha}^{{\cal{K}}_n}$ descends to a well defined section on ${\cal{K}}_n$ by applying the projection $p_*(\wp_{\alpha}^{{\cal{K}}_n})$. The result of this projection will be denoted by ${E'}_{\alpha}(x')$. ${E'}_{\alpha}(x')$ is the canonical skew symmetric periodization of $E_{\alpha}(x)$ that is constructed in such a way that it descends to the manifold. Therefore, the reproduction property of ${E'}_{\alpha}(x'-y')$ on ${\cal{K}}_n$ follows from the reproduction property of the usual Green's kernel $E_{\alpha}(x-y)$ in Euclidean space, where we apply the usual Green's integral formula for the Klein-Gordon operator.  
\par\medskip\par
{\bf Remarks}. The fundamental solution of the Klein-Gordon operator on the usual Klein bottle in two real variables (for pinor sections with values in the trivial bundle) has the form 
$$ p_*\Bigg(\sum\limits_{v   \in \Lambda_{2}} E_{\alpha}((x_1 + m_1),(-1)^{m_2}x_2+m_2)  \Bigg).$$
In terms of this formula for the fundamental solution of the Klein-Gordon operator on the manifolds ${\cal {K}}_n$ we can deduce similar representation formulas for the solutions to the general inhomogeneous Klein-Gordon problem with prescribed boundary conditions on these manifolds as presented at the end of the previous section in the context of the M\"obius strips.     
\par\medskip\par
The fact that the manifolds ${\cal{K}}_n$ are compact manifolds has some interesting special function theoretical consequences. We shall see that we can express any arbitrary solution of the Klein-Gordon equation with unessential singularities on these manifolds as a finite sum of linear combinations of the fundamental solution ${E'}_{\alpha}$ and its partial derivatives. To proceed in this direction, we start to establish
\begin{lemma}
Let $\alpha \neq 0$. Suppose that $f: \mathbb{R}^n \to \mathbb{C}$ is an entire solution of $(\Delta-\alpha^2)f=0$ on the whole $\R^n$. If $f$ additionally satisfies 
\begin{equation}\label{Kinv}
f(x_1+m_1,\cdots,x_n+m_n) = f(x_1,\ldots,x_{n-1},(-1)^{m_n} x_n)
\end{equation}
for all $(m_1,\ldots,m_n)\in \mathbb{Z}^n$, then $f$ vanishes identically on $\R^n$.   
\end{lemma}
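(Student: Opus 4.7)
The plan is to combine the pseudo-periodicity condition~(\ref{Kinv}), which forces $f$ to be bounded on $\R^n$, with the fact that the screened Helmholtz operator $\Delta-\alpha^2$ admits no non-trivial bounded null solution on $\R^n$ when $\alpha\neq 0$. Specializing~(\ref{Kinv}) to $m_n=0$ shows that $f$ is $1$-periodic in each of the variables $x_1,\ldots,x_{n-1}$; taking $m_1=\cdots=m_{n-1}=0$ and $m_n=2$ gives $f(\underline{x},x_n+2)=f(\underline{x},x_n)$, so $f$ is also $2$-periodic in $x_n$. Since $f$ is continuous, all of its values are already attained on the compact box $[0,1]^{n-1}\times[0,2]$, and therefore $f$ is bounded on $\R^n$.

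The second step is to expand $f$ as a Fourier series in the first $n-1$ variables,
$$
f(\underline{x},x_n)=\sum_{k\in\mathbb{Z}^{n-1}}a_k(x_n)\,e^{2\pi i\langle k,\underline{x}\rangle},\qquad a_k(x_n)=\int_{[0,1]^{n-1}}f(\underline{x},x_n)\,e^{-2\pi i\langle k,\underline{x}\rangle}\,dV(\underline{x}).
$$
By elliptic regularity $f\in C^\infty(\R^n)$, so each $a_k$ is smooth in $x_n$, and the trivial bound $|a_k(x_n)|\le\sup_{\R^n}|f|<\infty$ shows that $a_k$ is a bounded function on $\R$. Differentiating under the integral and inserting the equation $(\Delta-\alpha^2)f=0$ one mode at a time yields the ordinary differential equation
$$
a_k''(x_n)=\bigl(4\pi^2|k|^2+\alpha^2\bigr)\,a_k(x_n).
$$
Because $\alpha\neq 0$, the coefficient $4\pi^2|k|^2+\alpha^2$ is strictly positive for every $k\in\mathbb{Z}^{n-1}$, so the characteristic exponents $\pm\sqrt{4\pi^2|k|^2+\alpha^2}$ are real and non-zero. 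The only bounded solution of this ODE is therefore the zero solution, i.e.\ $a_k\equiv 0$ for every $k$, whence $f\equiv 0$.

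The only step that calls for any care is the derivation of the ODE for $a_k$, which amounts to differentiating under the integral sign; this is routine because the integration takes place over a compact torus and $f$ is smooth with all derivatives bounded on strips of the form $\R^{n-1}\times[-R,R]$. The hypothesis $\alpha\neq 0$ is essential for the final step: in the harmonic case the zero mode would satisfy $a_0''=0$, and the bounded solutions of this ODE are the constants, which are indeed harmonic and trivially verify~(\ref{Kinv}).
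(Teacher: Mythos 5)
Your proof is correct, and it takes a genuinely different route from the paper's after the (identical) first step.

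Both arguments begin by observing that~(\ref{Kinv}) forces $f$ to be $1$-periodic in $x_1,\ldots,x_{n-1}$ and $2$-periodic in $x_n$, hence bounded on $\R^n$ by continuity on the compact cell $[0,1]^{n-1}\times[0,2]$. After that they diverge. The paper invokes the global Taylor-type expansion
\[
f(x)=\sum_{q=0}^{\infty}|x|^{1-q-n/2}\,J_{q+n/2-1}(\alpha|x|)\,H_q(x)
\]
into spherical harmonics $H_q$ with (modified) Bessel radial profiles, and then appeals to the exponential growth of these radial factors to conclude that boundedness forces all $H_q$ to vanish. This is really a Liouville-type theorem for the operator $\Delta-\alpha^2$: no non-trivial bounded entire null solution exists when $\alpha\neq 0$, and the periodicity structure is used only to get boundedness. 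Your approach instead exploits the full periodicity of $f$: you expand in a Fourier series on the $(n-1)$-torus, derive the constant-coefficient ODE $a_k''=(4\pi^2|k|^2+\alpha^2)a_k$ for each mode, and observe that since $4\pi^2|k|^2+\alpha^2>0$, the only bounded (indeed, the only $2$-periodic) solution of this ODE is zero. This is more elementary, requiring only Fourier series on a torus and second-order linear ODE theory rather than asymptotics of Bessel functions, and it neatly isolates the role of the hypothesis $\alpha\neq 0$: without it the zero Fourier mode $a_0$ satisfies $a_0''=0$, whose bounded solutions are the non-zero constants (the harmonic case, which the paper treats in the subsequent Remark). One small strengthening worth noting: since $a_k$ is not merely bounded but $2$-periodic in $x_n$ (by~(\ref{Kinv})), you could invoke periodicity rather than boundedness in the last ODE step, making the argument even more self-contained.
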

\begin{proof}
Since $f$ satisfies the relation $$f(x_1+m_1,\cdots,x_n+m_n) = f(x_1,\ldots,x_{n-1},(-1)^{m_n} x_n),$$ it takes all its values in the $n$-dimensional period cell $[0,1]^{n-1}\times [0,2]$, because  
$$f(x_1+m_1,x_2+m_2,\ldots,x_{n-1}+m_{n-1},x_n+2m_n)=f(x_1,x_2,\ldots,x_{n-1},x_n)$$ for all $(m_1,\ldots,m_n) \in \mathbb{Z}^n$. 
The set $[0,1]^{n-1} \times [0,2]$ is compact. Since $f$ is an entire solution of $(\Delta-\alpha^2)f=0$ on the whole $\R^n$, it is in particular continuous on $[0,1]^{n-1}\times [0,2]$. Consequently, $f$ must be bounded on $[0,1]^{n-1}\times [0,2]$ and therefore it must be bounded over the whole $\R^n$, too. 
\par\medskip\par
Since $f$ is an entire solution of $(\Delta-\alpha^2)f=0$, it can be expanded into a Taylor series of the following form, compare with \cite{ConKraMMAS2009}, 
$$
f(x) = \sum\limits_{q=0}^{\infty} |x|^{1-q-n/2} J_{q+n/2-1}(\alpha|x|) H_q(x).
$$
This Taylor series representation holds in the whole space $\R^n$. 
\par\medskip\par
Here, $H_q(x)$ are homogeneous harmonic polynomials of total degree $q$. These are often called spherical harmonics, cf. for example~ \cite{M}.  

Since the Bessel $J$ functions are exponentially unbounded away from the real axis, $f$ can only be bounded if all spherical harmonics $H_q$ vanish identically. Hence, $f \equiv 0$.   
\end{proof}
Notice that all constant functions $f \equiv C$ with $C \neq 0$ are not solutions of $(\Delta-\alpha^2)f=0$. As a direct consequence we obtain
\begin{corollary}
There are no non-vanishing entire solutions of $(\Delta-\alpha^2)f=0$ on the manifolds ${\cal{K}}_n$ (in particular on the Klein bottle ${\cal{K}}_2$). 
\end{corollary}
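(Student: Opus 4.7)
My plan is to lift any alleged entire solution on $\mathcal{K}_n$ to the universal cover $\R^n$ and then invoke the preceding lemma. All of the genuine analytic work has already been done there, so the corollary reduces to a covering-space argument together with a verification of the Klein-bottle invariance \eqref{Kinv}.

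Concretely, let $f'$ be an entire scalar solution of $\Delta'_\alpha f' = 0$ on $\mathcal{K}_n$, and form the pullback $f := f' \circ p_*$ on $\R^n$. Since $p_*$ is a covering map and hence a local diffeomorphism, $f$ is smooth on all of $\R^n$; and since $\Delta'_\alpha = p_*(\Delta-\alpha^2)$ by construction, $f$ satisfies $(\Delta-\alpha^2) f = 0$ pointwise on $\R^n$. Thus $f$ qualifies as an entire solution in the sense of the preceding lemma. To check the hypothesis \eqref{Kinv}, I note that $f$ is by construction constant on $\sim^*$-equivalence classes, and that by the definition of $\sim^*$ the points $(x_1+m_1,\ldots,x_{n-1}+m_{n-1},x_n+m_n)$ and $(x_1,\ldots,x_{n-1},(-1)^{m_n}x_n)$ lie in the same fiber of $p_*$ for every $(m_1,\ldots,m_n)\in \mathbb{Z}^n$. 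Hence
\begin{equation*}
f(x_1+m_1,\ldots,x_n+m_n) \;=\; f(x_1,\ldots,x_{n-1},(-1)^{m_n}x_n),
\end{equation*}
which is precisely \eqref{Kinv}. The lemma then forces $f\equiv 0$ on $\R^n$, and consequently $f'\equiv 0$ on $\mathcal{K}_n$.

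I do not expect any substantive obstacle: the two pieces of real content — using compactness of the fundamental cell $[0,1]^{n-1}\times[0,2]$ to bound $f$, and the Bessel/spherical-harmonic Taylor expansion to conclude $f\equiv 0$ from boundedness when $\alpha\neq 0$ — are already encapsulated in the preceding lemma, so what remains is just the covering-space reduction above. The only bookkeeping point I would flag is that the argument as written treats scalar solutions, matching the lemma's hypothesis; an extension to sections with values in a non-trivial pin bundle $\mathcal{E}^{(q)}$ would introduce an extra sign factor $(-1)^{m_1+\cdots+m_l}$ in the quasi-periodicity of the pullback, but since this factor is unimodular it does not affect the boundedness step and the Taylor--Bessel argument still applies component-wise.
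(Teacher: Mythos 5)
Your proof is correct and takes essentially the same route the paper intends: the paper's own justification for the corollary is a one-line remark that it is a consequence of compactness (i.e.\ of the immediately preceding lemma), and your pullback-via-$p_*$ argument, together with the observation that $(x_1+m_1,\ldots,x_n+m_n)$ and $(x_1,\ldots,x_{n-1},(-1)^{m_n}x_n)$ lie in the same $\sim^*$-fiber so that the invariance~(\ref{Kinv}) holds for $f=f'\circ p_*$, is exactly the covering-space reduction being left implicit. Your closing remark about the unimodular factor $(-1)^{m_1+\cdots+m_l}$ for non-trivial pin bundles is a correct and worthwhile bookkeeping observation that the paper does not spell out.
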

This is a fundamental consequence of the compactness of the manifolds ${\cal{K}}_n$. Notice that this argument cannot be carried over to the context of the manifolds that we considered in the previous section, since those are not compact. 
\par\medskip\par
{\bf Remark}. The statement can be adapted to the harmonic case $\alpha=0$. In this case one has a Taylor series expansion of the simpler form 
$$
f(x) = \sum\limits_{q=0}^{\infty} H_q(x),
$$
where only the spherical harmonics of total degree $q=0,1,\ldots$ are involved. The only bounded entire harmonic functions are constants. Applying the same argumentation leads to the fact that the only harmonic solutions on ${\cal{K}}_n$ are constants.  
\par\medskip\par
We can say more. We shall now explain how we can use the function $\wp^{{{\cal K}}_n}(x)$ as basic function to construct any arbitrary solution to the Klein-Gordon operator on ${\cal{K}}_n$. 
\par\medskip\par
In order to make the paper self-contained we briefly recall from \cite{KraHabil,Nef2} the classification of the singularities and the definition of the order of singularities, adapted here to the context of harmonic functions in $\R^n$. We start with 
\begin{definition} 
A point $\tilde{x} \in \R^n$ is called a left regular point of a harmonic function $f$, if there exists an $\varepsilon > 0$ such that $f$ is harmonic in the open ball $B(\tilde{x},\varepsilon):=\{x \in \R^n \mid |x-\tilde{x}|_2 < \varepsilon\}$.  
Otherwise, $\tilde{x}$ is called a singular point of $f$. $\tilde{x}$ is called an isolated singularity of $f$ if it is a singular point of $f$ for which one can find an $\varepsilon > 0$ such that $f$ is harmonic in $B(\tilde{x},\varepsilon)\backslash\{\tilde{x}\}$.
\end{definition}
Next suppose that $U \subset \R^n$ is an open set and that $S \subset U$ is a closed subset. Let us consider around each $s \in S$ a ball  with radius $\rho > 0$ and let us denote by $H_{\rho}$ the hypersurface of  the union of all  balls centered at each $s \in S$ with radius $\rho$. In the case where $S$ is just a single isolated point, $H_{\rho}$ is simply the sphere centered at $s$ with radius $\rho$. If $S$ is a rectifiable line, or, more generally, a $p$-dimensional  manifold with boundary, then $H_{\rho}$ is the surface of a tube domain.   
Let us denote by $J(\rho)$ the limit inferior of the volumes of all closed orientable hypersurfaces $H_{R}$ with continuous normal field that contains $S$ in the interior where we suppose that 
$$
\inf\{|s-\tilde{x}|_2; s \in S, \tilde{x} \in H_{R} \} \ge \rho.
$$
In terms of these notions one can now give the following definition which was given in \cite{KraHabil} for Clifford algebra valued  monogenic functions.
\begin{definition}
Let $U \subset  \R^n$ be an open set and let $S \subset U$ be a closed subset. Suppose that $f$ is harmonic in $U \backslash S$ and that $f$ has singularities in each $s \in S$. Let $H_{\rho} \subset U$ be a hypersurface as defined above. 
The singular point $s \in S$ is then called an unessential singularity of $f$ if there is an $r \in \mathbb{N}$ and an $M>0$ such that
\begin{equation}\label{Hrho}
|\rho^r J(\rho)  f(\tilde{x}) |_2< M  \quad \quad \tilde{x} \in H_{\rho}.
\end{equation}
In this case the minimum of the parameters $r$ is called the order of the singular point $s$.  If no finite $r$ with this property can be found, then $s$ is called an essential singularity.
\end{definition}
In the case where $S$ is a isolated point singularity, a rectifiable line or a manifold with boundary, one can substitute the condition (\ref{Hrho}) by 
$$
|\rho^r f(\tilde{x}) |_2 < M.
$$
\par\medskip\par
Suppose now that $f$ is a function with the invariance behavior (\ref{Kinv}) that satisfies $(\Delta - \alpha^2) f = 0$ in $\R^n
\backslash \Lambda_n$. Let ${\bf m} := (m_1,\ldots,m_n) \in \mathbb{N}_0^n$ be
a multi-index with length $|{\bf m}|:= m_1 + \cdots + m_n$. Then
the function $f_{\bf m}(x) = \frac{\partial^{|{\bf
m}|}}{\partial x_1^{m_1} \cdots \partial x_n^{m_n}} f(x)$ satisfies (\ref{Kinv}) too, and it is also in Ker $(\Delta -
\alpha^2)$. In particular, when taking the function $$\wp_{\alpha,q}^{{\cal{K}}_n}(x) :=
\sum\limits_{v   \in \Lambda_l \oplus \Lambda_{n-l}} (-1)^{m_1+\cdots+m_l} E_{\alpha}(\sum_{i=1}^{n-1} (x_i + m_i) e_i,(-1)^{m_n}x_n+x_m)
,$$
then the functions $\wp_{\alpha,q;{\bf m}}^{{\cal{K}}_n}(x) :=
\frac{\partial^{|{\bf m}|}}{\partial x_1^{m_1} \cdots \partial x_n^{m_n}}
\wp_{\alpha,q}^{{\cal{K}}_n}(x)$ have the invariance behavior  (\ref{Kinv}) 
and satisfy $(\Delta-\alpha^2) \wp_{\alpha,q;{\bf m}}^{{\cal{K}}_n}(x) = 0$ at 
each point of $\R^n \backslash \Lambda_n$. In each lattice point they 
have an isolated pole of order $n-2+|{\bf m}|$. In view of the
translation invariance of the operator $\Delta-\alpha^2$, we can
construct functions satisfying $(\ref{Kinv})$ that have poles in a given
set of points $a_i + \Lambda_n$ of order $N_i$ ($i=1,\ldots,l)$ with
$N_i \ge n-2$ by making the construction
\begin{equation}
\label{con} \sum\limits_{i=1}^l \wp_{\alpha,q;{\bf N}_i}^{{\cal{K}}_n}(x -
a_i) b_i
\end{equation}
where ${\bf N}_i$ is a multi-index of length $N_i$ and where $b_i$
are arbitrary elements from  $\mathbb{C}$. 
Due to the compactness of the fundamental period cell, one can only construct functions in Ker $\Delta - \alpha^2$ satisfying (\ref{Kinv}) with a finite number of isolated singularities.  

In contrast to the classical harmonic case, it is possible to construct non-vanishing sections on the manifolds ${\cal{K}}_n$ in the kernel of the Klein-Gordon operator with only one point singularity of order $n-2$ on the whole manifold. The most important example is  ${E'}_{\alpha,q}(x')$. 

Now we may formulate two main representation theorems that fully describe all sections on ${\cal{K}}_n$ in the kernel of the Klein-Gordon operator with non-essential singularities. 

\begin{theorem}\label{th2} 
Let ${a}_1,{a}_2,\ldots,{a}_p \in \R^n \backslash \Lambda_n$ be a
finite set of points that are incongruent modulo $\Lambda_n$. Suppose 
that $f: \R^n \backslash\{{a}_1+\Lambda_n,\ldots,{a}_p+\Lambda_n\} \to 
{\mathbb{C}}$ is an element from Ker $\Delta-\alpha^2$ satisfying (\ref{Kinv})  which has at most isolated poles at
the points ${a}_i$ of the order $K_i$ (where the order is defined as in \cite{KraHabil}). Then there exist 
numbers $b_1,\ldots,b_p \in \mathbb{C}$ such that
\begin{equation}f(x)= \sum\limits_{i=1}^p \sum\limits_{m=0}^{K_i-(n-2)}\sum\limits_{m=m_1+m_2+\cdots+m_n}
\Bigg[\wp_{\alpha,q;{\bf m}}^{{\cal{K}}_n}({\bf x}-{a}_i)b_i\Bigg].
\end{equation}
\end{theorem}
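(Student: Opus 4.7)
The plan is to subtract from $f$ a tailored linear combination of the shifted generalised Weierstra{\ss} functions $\wp_{\alpha,q;{\bf m}}^{{\cal{K}}_n}(x-a_i)$ that cancels every singular contribution, and then to invoke the preceding corollary about entire solutions of the Klein-Gordon equation on ${\cal{K}}_n$ to conclude that what remains must vanish identically.

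First I would analyse the local behaviour of $f$ near each $a_i$. On a small punctured ball around $a_i$ the function $f$ solves $(\Delta-\alpha^2)f=0$ with an isolated singularity of order at most $K_i$. A Laurent-type decomposition adapted to this Helmholtz-type operator, exactly parallel to the classification of isolated singularities for monogenic functions recalled in \cite{KraHabil}, expresses the principal part of $f$ at $a_i$ uniquely as a finite linear combination
$$
\sum\limits_{|{\bf m}|=0}^{K_i-(n-2)} b_{i,{\bf m}}\, \frac{\partial^{|{\bf m}|} E_{\alpha}}{\partial x_1^{m_1}\cdots \partial x_n^{m_n}}(x-a_i), \qquad b_{i,{\bf m}}\in\mathbb{C},
$$
modulo a term that is already regular at $a_i$. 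Making this classification rigorous is what I expect to be the main obstacle; it amounts to showing that the partial derivatives of the fundamental solution span all admissible principal parts of $(\Delta-\alpha^2)$-solutions at an isolated singular point of prescribed order.

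Next I would exploit the fact that, by the construction of the series defining $\wp_{\alpha,q;{\bf m}}^{{\cal{K}}_n}(x-a_i)$, the only lattice term contributing a singularity at $a_i$ is the one with $v=0$, which is precisely the partial derivative of $E_\alpha$ appearing above; every other term is smooth in a punctured neighbourhood of $a_i$, and symmetrically at each translate $a_i+\Lambda_n$. Using the coefficients $b_{i,{\bf m}}$ supplied by the first step, set
$$
F(x) := f(x) - \sum\limits_{i=1}^p \sum\limits_{|{\bf m}|=0}^{K_i-(n-2)} \wp_{\alpha,q;{\bf m}}^{{\cal{K}}_n}(x-a_i)\, b_{i,{\bf m}}.
$$
By construction the singular contribution of $f$ at every $a_i$ is cancelled. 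Since $f$ and each subtracted summand satisfy the invariance relation (\ref{Kinv}) — as observed in the paragraph preceding the theorem — the poles at all translates $a_i+v$, $v\in\Lambda_n$, are annihilated simultaneously, so that $F$ extends to an entire solution of $(\Delta-\alpha^2)F=0$ on all of $\R^n$ still satisfying (\ref{Kinv}).

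Finally, the preceding lemma, which asserts that the only entire Klein-Gordon solution with invariance (\ref{Kinv}) is the function identically equal to zero, forces $F\equiv 0$. Rearranging this identity yields exactly the representation claimed in the theorem.
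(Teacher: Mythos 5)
Your proposal reproduces the paper's argument essentially verbatim: identify the principal part of $f$ at each $a_i$ as a finite linear combination of derivatives of $E_\alpha$, subtract the corresponding combination of $\wp_{\alpha,q;\mathbf m}^{\mathcal{K}_n}(x-a_i)$ to obtain an entire solution still satisfying (\ref{Kinv}), and invoke the preceding Liouville-type lemma to conclude the difference vanishes. Your explicit flag that the Laurent-type classification of singular parts for $\Delta-\alpha^2$ needs justification (the paper simply cites \cite{KraHabil}) and your note that only the $v=0$ term of the $\wp$-series is singular at $a_i$ are both sound clarifying remarks, not deviations.
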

\begin{proof} Since $f$ is supposed to be in Ker $\Delta-\alpha^2$ 
with isolated poles at the points $a_i$ of order $K_i$ its singular 
parts of the local Laurent series expansions are of the form
$E_{\alpha,q;{\bf m}}(x-{a}_i)b_i$ at each point $a_i +
\Omega$, where $E_{\alpha,q,{\bf m}}(y):=
\frac{\partial^{|{\bf m}|}}{\partial y_1^{m_1} \cdots \partial y_n^{m_n}}  
E_{\alpha,q}(y)$. As a sum of functions with the properties of satisfying $(\ref{Kinv})$ and of being in Ker $\Delta-\alpha^2$ , the expression
$$
g(x) = \sum\limits_{i=1}^p \sum\limits_{m=0}^{K_i-(n-2)}\sum\limits_{m=m_1+m_2+\cdots+m_n} \Bigg[\wp_{\alpha,q,{\bf m}}^{{\cal{K}}_n}(x-{a}_i)b_i\Bigg]
$$
also satisfies (\ref{Kinv}) and has also the same principal parts as
$f$. Hence, the function $h:=g-f$ also satisfies (\ref{Kinv}) and is in Ker $\Delta-\alpha^2$. However, $h$ has no singular parts, since
these are canceled out. The function $h$ is hence an entire solution in Ker $\Delta-\alpha^2$ which satisfies (\ref{Kinv}) and must
therefore vanish as a consequence of the preceding lemma.
\end{proof}
We can adapt this theorem to the case of dealing with functions satisfying (\ref{Kinv}) that have non-isolated singularities in the
following way:
\begin{theorem}\label{repthm2}
Suppose that $f$ is a $\mathbb{C}$-valued function satisfying (\ref{Kinv}) and $(\Delta- \alpha^2) f = 0$ in $\R^n$ except in a finite number of components of non-isolated singular sets $S_1,\ldots,S_l$ of the
orders $N(S_1),\ldots,N(S_l)$ (as defined in \cite{KraHabil}) which
are supposed to be incongruent modulo $\Lambda_n$. Then there exists
functions $b_{\bf j}:S_j \to \mathbb{C}$ of bounded variation
such that
$$
f(x) = \sum\limits_{i=1}^l \sum\limits_{{\bf j} \in {\bf J}^{(i)}} \Big(\int_{S_i} \wp_{\alpha,q;{\bf j}}^{{\cal{K}}_n}(x - c^{(i)}) d[b_{\bf j}(c^{(i)})]\Big),
$$
where the integral has to be understood as Lebesgue-Stieltjes
integral as in \cite{KraHabil,Nef2}. Here we denote by ${\bf
J}^{(i)}$ the set of those multi indices ${\bf j}$ for which the
functions $b_{\bf j}^{(i)}(c^{i})$ do not vanish identically.
\end{theorem}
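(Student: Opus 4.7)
The plan is to mirror the proof of Theorem~\ref{th2}, replacing the finite sums over isolated poles with Lebesgue--Stieltjes integrals over the singular sets $S_i$. The three structural steps are the same: first, write down a candidate $g$ constructed from $\wp_{\alpha,q;{\bf j}}^{{\cal{K}}_n}$ whose ``principal parts'' near each lifted copy of $S_i$ agree with those of $f$; second, verify that $g$ is in $\mathrm{Ker}(\Delta-\alpha^2)$ and satisfies (\ref{Kinv}); third, form $h := g - f$ and show it is an entire, (\ref{Kinv})-invariant null solution of $(\Delta-\alpha^2)$, hence $h\equiv 0$ by the lemma preceding Theorem~\ref{th2}.

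The first ingredient I need is the local Laurent-type expansion of a null solution of $\Delta-\alpha^2$ around a non-isolated singular set $S_i$ of order $N(S_i)$ as introduced in \cite{KraHabil,Nef2}. The point is that the singular part of $f$ in a tubular neighborhood of $S_i$ admits a representation
\begin{equation*}
\sum_{{\bf j} \in {\bf J}^{(i)}} \int_{S_i} E_{\alpha,q;{\bf j}}(x - c^{(i)})\, d[b_{\bf j}^{(i)}(c^{(i)})],
\end{equation*}
where $E_{\alpha,q;{\bf j}}$ are the partial derivatives of the Euclidean fundamental solution and the $b_{\bf j}^{(i)}$ are functions of bounded variation on $S_i$; the admissible multi-indices ${\bf j}$ are bounded in length by $N(S_i)-(n-2)$, which keeps the family finite. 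This is the analogue of picking out principal parts at isolated poles, and it is essentially imported from the cited works and transplanted from the Clifford-monogenic setting to the scalar screened-Laplace setting (the estimates are the same since $|E_\alpha|$ has the same order of singularity as the Cauchy-type kernels there).

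With the $b_{\bf j}^{(i)}$ in hand, I define
\begin{equation*}
g(x) := \sum_{i=1}^{l} \sum_{{\bf j} \in {\bf J}^{(i)}} \int_{S_i} \wp_{\alpha,q;{\bf j}}^{{\cal{K}}_n}(x - c^{(i)})\, d[b_{\bf j}(c^{(i)})].
\end{equation*}
The (\ref{Kinv})-invariance of $g$ follows pointwise under the integral sign from the invariance of $\wp_{\alpha,q}^{{\cal{K}}_n}$ established in the preceding lemma; differentiation under the integral sign is justified by the normal convergence of $\wp_{\alpha,q}^{{\cal{K}}_n}$ on $\R^n\setminus\Lambda_n$ together with the bounded variation of $b_{\bf j}$. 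Since $\wp_{\alpha,q;{\bf j}}^{{\cal{K}}_n}(x-c)$ differs from $E_{\alpha,q;{\bf j}}(x-c)$ only by the regular sum over the nonzero lattice translates (which are analytic in a neighborhood of each component $a_i + S_i$), the principal parts of $g$ at each congruence class of $S_i$ reproduce exactly the principal parts of $f$ recorded above.

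It remains to form $h := g - f$. By construction $h$ lies in $\mathrm{Ker}(\Delta-\alpha^2)$, satisfies (\ref{Kinv}), and has no singular parts on any $S_i$, so $h$ extends as a null solution to all of $\R^n$. The Liouville-type lemma preceding Theorem~\ref{th2} then forces $h \equiv 0$, yielding the claimed representation. The main obstacle I foresee is the rigorous justification of the Lebesgue--Stieltjes local representation of the singular part in this scalar context; this is what the references \cite{KraHabil,Nef2} are invoked for, and one has to check that the uniqueness of the representing measures $b_{\bf j}^{(i)}$ on each $S_i$ transfers from the monogenic setting to the screened-harmonic setting. Convergence and differentiability of the integral defining $g$ are by comparison routine given the exponential decay of $E_\alpha$ used in Lemma~\ref{conv}.
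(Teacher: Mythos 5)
Your proposal follows exactly the route the paper takes: it mirrors the proof of Theorem~\ref{th2} by replacing the finite sums over isolated poles with Lebesgue--Stieltjes integrals over the singular sets $S_i$ (importing the local singular-part representation from \cite{KraHabil,Nef2}), forms $h := g - f$, and invokes the Liouville-type lemma to conclude $h \equiv 0$. The paper's own proof is only a one-sentence pointer to this same adaptation, so if anything your write-up supplies more of the intermediate reasoning (normal convergence, differentiation under the integral, and an honest flag of the transfer of the singular-part representation from the monogenic to the scalar setting) than the original does.
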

To establish this more general version one can adapt the arguments
of the proof to the previous theorem and the arguments using
Lebesgue-Stieltjes integrals as applied in the context of
monogenic $n$-fold periodic functions with non-isolated
singularity sets in \cite{KraHabil}, Chapter 2.
\par\medskip\par
From Theorem~\ref{repthm2} we may now readily derive by applying the projection linear map $p_*$ the following statement.
\begin{theorem}\label{repthm3}
Let $S_1',\ldots,S_l' \subset {\cal{K}}_n$ be closed subsets.  
Suppose that $f'$ is an ${\cal{E}}^{(q)}$-valued section on ${\cal{K}}$ that satisfies the Klein-Gordon equation $\Delta'_{\alpha} f'=0$ on the manifold except in a finite number of components of non-isolated singular sets $S_1',\ldots,S_l'$ of the
orders $N(S_1'),\ldots,N(S_l')$.  Then there exist sections $b_{\bf j}':S_j' \to {\cal{E}}^{(q)}$ of bounded variation
such that
$$
f'(x') = \sum\limits_{i=1}^l \sum\limits_{{\bf j} \in {\bf J}^{(i)}} \Big(\int_{S_i'} p_*\Big(\wp_{\alpha,q;{\bf j}}^{{\cal{K}}_n}(x' - {c'}^{(i)})\Big) d[b_{\bf j}'({c'}^{(i)})]\Big),
$$
where we use the same notation as in Theorem~\ref{repthm2}. 
\end{theorem}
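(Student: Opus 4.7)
The plan is to reduce Theorem~\ref{repthm3} to Theorem~\ref{repthm2} via the covering projection $p_\ast : \R^n \to {\cal K}_n$. First I would lift the given ${\cal E}^{(q)}$-valued section $f'$ to a $\mathbb{C}$-valued function $f$ on $\R^n$ in the following manner: after trivializing ${\cal E}^{(q)}$ locally, the defining transition maps of the bundle dictate precisely the invariance $f(x+k) = (-1)^{m_1+\cdots+m_l}\,f(x_1,\ldots,x_{n-1},(-1)^{k_n}x_n)$ for every $k = m_1 e_1+\cdots+m_n e_n \in \Lambda_n$. In the case of the trivial bundle this is exactly (\ref{Kinv}); for a general ${\cal E}^{(q)}$ the same argument goes through after absorbing the parity factor $(-1)^{m_1+\cdots+m_l}$ into the series, in complete analogy with how $\wp_{\alpha,q}^{{\cal K}_n}$ was defined from $\wp_{\alpha}^{{\cal K}_n}$.

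Next I would lift the singular sets: each closed $S_i' \subset {\cal K}_n$ pulls back under $p_\ast$ to a $\Lambda_n$-invariant closed set $S_i = p_\ast^{-1}(S_i') \subset \R^n$, and the invariance (\ref{Kinv}) guarantees that the order $N(S_i) = N(S_i')$ is well-defined and is inherited from the local order on ${\cal K}_n$. Since $f$ satisfies $(\Delta - \alpha^2) f = 0$ on $\R^n \setminus \bigcup_i S_i$, has the required pseudo-periodicity, and has non-essential singularities of the prescribed orders on the $S_i$, I can invoke Theorem~\ref{repthm2} directly. This produces functions $b_{\bf j}^{(i)}$ of bounded variation defined on a fundamental representative of $S_i$ inside $\R^n$ such that
\begin{equation*}
f(x) = \sum_{i=1}^l \sum_{{\bf j} \in {\bf J}^{(i)}} \int_{S_i} \wp_{\alpha,q;{\bf j}}^{{\cal K}_n}(x - c^{(i)})\, d[b_{\bf j}(c^{(i)})].
\end{equation*}

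Finally I would apply $p_\ast$ to both sides. By Lemma~\ref{periodic} and its analogue for ${\cal K}_n$, each $\wp_{\alpha,q;{\bf m}}^{{\cal K}_n}$ descends to a well-defined section of ${\cal E}^{(q)}$ on ${\cal K}_n$, so the projection commutes with the sum and with the Lebesgue--Stieltjes integral (the latter because the integral is a limit of finite sums which $p_\ast$, being linear and continuous on each trivializing chart, passes through). Setting $b_{\bf j}' := p_\ast(b_{\bf j})$ and noting that the integration variable $c^{(i)} \in S_i$ projects to ${c'}^{(i)} \in S_i'$, I obtain the claimed formula.

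The main technical obstacle is the last paragraph: one must check that the Lebesgue--Stieltjes integrals over $S_i$ really reduce to integrals over the compact quotient $S_i' \subset {\cal K}_n$ without double-counting lattice translates, and that the bounded-variation densities $b_{\bf j}$ produced by Theorem~\ref{repthm2} can be chosen to be themselves $\Lambda_n$-equivariant (with the parity twist of ${\cal E}^{(q)}$), so that their push-forwards $b_{\bf j}'$ are well-defined sections of bounded variation on $S_i'$. This equivariance can be arranged by restricting the construction in Theorem~\ref{repthm2} to a fundamental domain for the $\Lambda_n$-action on $S_i$ and then extending equivariantly; uniqueness of the principal parts at each $a_i \in S_i$ ensures consistency across lattice translates.
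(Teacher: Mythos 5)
Your proof follows exactly the same strategy the paper uses, namely a one-line reduction: the paper's entire argument for Theorem~\ref{repthm3} is the sentence preceding it, ``From Theorem~\ref{repthm2} we may now readily derive by applying the projection linear map $p_*$ the following statement.'' You have correctly unpacked what that sentence means --- lift $f'$ to a pseudo-periodic $f$ on $\R^n$ obeying (\ref{Kinv}) (with the parity twist dictated by ${\cal E}^{(q)}$), choose fundamental representatives $S_i \subset \R^n$ for the singular sets so that they are incongruent modulo $\Lambda_n$ as Theorem~\ref{repthm2} requires, invoke Theorem~\ref{repthm2}, and push the resulting representation forward by $p_*$, using that the $\wp_{\alpha,q;{\bf j}}^{{\cal K}_n}$ descend and that $p_*$ commutes with finite sums and Lebesgue--Stieltjes integrals. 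One small caution: you first set $S_i = p_*^{-1}(S_i')$, which is the full $\Lambda_n$-invariant preimage and hence not admissible as input to Theorem~\ref{repthm2} (that theorem wants sets incongruent modulo $\Lambda_n$); you do correct this at the end by restricting to a fundamental domain, but it would be cleaner to make that choice at the outset rather than lift the full preimage and then cut it down. With that ordering fixed, your argument is the paper's argument, just made explicit.
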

In the case of dealing with essential singularities the latter series is extended over infinitely many multi-indices ${\bf j}$. 
\section{Acknowledgements} The author is very thankful to Dr. Denis Constales from Ghent University for the fruitful discussions on the asymptotics of the fundamental solution of the Klein-Gordon operator in $\R^n$.  

\end{document}